\newcommand{\Z}{\mathbb{Z}}
\newcommand{\N}{\mathbb{N}}
\newcommand{\ID}{\mathrm{id}}
\newcommand{\clone}[1]{\left\lceil#1\right\rceil}
\newcommand{\gen}[1]{\left\langle#1\right\rangle}
\newcommand{\id}{id}
\newcommand{\controlled}[2]{f_{#1,#2}}
\newcommand{\cp}[2]{CP(#1,#2)}
\newcommand{\Cone}{B(A)}
\newcommand{\Ctwo}{Cons(A)}
\newcommand{\Cthree}{Even(A)}
\newcommand{\Cfour}{ECons(A)}
\newcommand{\Sym}{\mathrm{Sym}}
\newcommand{\Alt}{\mathrm{Alt}}
\newcommand{\sym}{\mathrm{Sym}}
\newcommand{\alt}{\mathrm{Alt}}
\newcommand{\what}{revital}
\newcommand{\UTU}{
 Department of Mathematics and Statistics, University of Turku, Finland.
 }
\newcommand{\LINZ}{
 Institute for Algebra, Johannes Kepler University Linz, Austria
and Time's Up Research, Linz, Austria.
 }
\newcommand{\CHILE}{
Center for Mathematical Modeling, University of Chile, Santiago, Chile.
}
\begin{document}

 \mainmatter

 \title{Strongly Universal Reversible Gate Sets \thanks{The authors would like to acknowledge the contribution of the COST Action IC1405
 This work was partially funded by
 Austrian national research agency FWF research grants P24077 and P24285, and by FONDECYT research grant 3150552.}}
 \titlerunning{Strongly Universal Gates}

 \author{Tim Boykett \inst{1}
    \and Jarkko Kari \inst{2} 
    \and Ville Salo \inst{2,3} 
 }

\institute{\LINZ \and \UTU \and \CHILE}

\maketitle

\begin{abstract}
It is well-known that the Toffoli gate and the negation gate together yield a universal gate set, in the sense that every permutation
of $\{0,1\}^n$ can be implemented as a composition of these gates. Since every bit operation that does not use all of the bits performs
an even permutation, we need to use at least one auxiliary bit to perform every permutation, and it is known that one bit is indeed enough.
Without auxiliary bits, all even permutations can be implemented. We generalize these results to non-binary logic: If $A$ is a finite set
of odd cardinality then a finite gate set can generate all permutations of  $A^n$ for all $n$, without any auxiliary symbols.
If the cardinality of $A$ is even then, by the same argument as above, only even permutations of $A^n$ can be implemented for large $n$,
and we show that indeed all even permutations can be obtained from a finite universal gate set. We also consider the conservative case,
that is, those permutations of $A^n$ that preserve the weight of the input word. The weight is the vector that records how many times
each symbol occurs in the word. It turns out that no finite conservative gate set can, for all $n$,
implement all conservative even permutations
of  $A^n$ without auxiliary bits. But we provide a finite gate set that can implement all those conservative permutations
that are even within each weight class of $A^n$.
\end{abstract}

\section{Introduction}

The study of reversible and conservative binary gates was pioneered in the 1970s and 1980s by
Toffoli and Fredkin \cite{FrTo82,toff80}.
Recently, Aaronson, Greier and Schaeffer \cite{aaronsonetal15} described all binary gate sets closed under
the use of auxilliary bits, as a prelude to their eventual goal of classifying these gate sets in the quantum case. It has been noted that ternary gates have similar, yet distinct properties \cite{yangetal05}.

In this article, we consider the problem of finitely-generatedness of various families of reversible logic gates without using auxiliary bits.
In the case of a binary alphabet, it is known that the whole set of gates is not finitely generated, but the family of gates that perform an even permutation of $\{0,1\}^n$ is \cite{aaronsonetal15,xu15}. In \cite{yangetal05}, it is shown that for the ternary alphabet, the whole set of reversible gates is finitely generated. In this paper, we look at gate sets with arbitrary finite 
alphabets, and prove the natural generalization: the whole set of gates is finitely generated if and only if the alphabet is odd, and in the case of an even alphabet, the even permutations are finitely generated.

In \cite{xu15}, it is proved that in the binary case the conservative gates, gates that preserve the numbers of symbols in the input (that is, its weight), are not finitely generated, even with the use of `borrowed bits', bits that may have any initial value but must return to their original value in the end. On the other hand, it is shown that with bits whose initial value is known (and suitably chosen), all permutations can be performed. We prove for all alphabets that the gates that perform an even permutation in every weight class are finitely generated, but the whole class of permutations is far from being finitely generated (which implies in particular the result of \cite{xu15}).

Our methods are rather general, and the proofs both in the conservative case and the general case follow the same structure. The negative aspect of these methods is that our universal gates are not the usual ones, and for example in the conservative case, one needs a bit of work (or computer time) to construct our universal gate family from the Fredkin gate.

We start by introducing our terminology, taking advantage of the
concepts of clone theory \cite{szendrei} applied to bijections
as developed in \cite{boykett15}, leading to what we call \emph{reversible clones} or \emph{revclones}, and \emph{reversible iterative algebras} or
\emph{\what{}s}.
We generalize the idea of the Toffoli gate and Fredkin gate to what we call `controlled permutations' and prove a general induction lemma showing that if we can a single new control wire to a controlled permutation, we can add any amount. 
We then show two combinatorial results about permutation groups that allow us
to simplify arguments about \what{}s.
This allows us to describe generating sets for various revclones and \what{}s of interest,
with the indication that these results will be useful for more general \what{} analysis,
as undertaken for instance in \cite{aaronsonetal15}.
While theoretical considerations show that finite generating sets do not
exist in some cases, in other cases explicit computational searches are able to provide
small generating sets.

\section{Background}

Let $A$ be a finite set. We write $S_A$ or $\Sym(A)$ for the group of permutations or bijections of
$A$, $S_n$ for $Sym(\{1,\dots,n\})$ and
$\Alt(A)$ for the group of even permutations of $A$, $A_n=Alt(\{1,\dots,n\})$. 
We will compose functions from left to right.
Let $B_n(A) = \{f:A^n\rightarrow A^n \;\vert\; f \mbox{ a bijection}\}=\Sym(A^n)$
be the group of $n$-ary bijections on $A^n$, and let
$B(A) = \cup_{n\in \N} B_n(A)$ be the collection of all bijections on powers of $A$.
We will call them \emph{gates}. We denote by
$\gen{X}$ the group generated by $X\subseteq B_n(A)$, a subgroup of $B_n(A)$.

Each $\alpha \in S_n$ defines a \emph{wire permutation} $\pi_\alpha \in B_n(A)$
that permutes the coordinates of its input according to $\alpha$:
$$
\pi_\alpha(x_1,\dots,x_n) = (x_{\alpha^{-1}(1)},\dots, x_{\alpha^{-1}(n)}).
$$
The wire permutation $\id_n=\pi_{()}$ corresponding to the identity permutation $()\in S_n$ is the $n$-ary identity map.
Conjugating $f\in B_n(A)$ with a wire permutation $\pi_\alpha\in B_n(A)$ gives
$\pi_\alpha \circ f\circ \pi_\alpha^{-1}$, which we call a
\emph{rewiring} of $f$.
Rewirings of $f$ correspond to applying $f$ on arbitrarily ordered input wires.

Any $f\in B_\ell(A)$ can be applied on $A^n$ for $n>\ell$ by applying it on selected $\ell$ coordinates while leaving
the other $n-\ell$ coordinates unchanged. Using the
clone theory derived  terminology in \cite{boykett15} we first define, for any $f\in B_n(A)$ and $g\in B_m(A)$, the
parallel application $f \oplus g \in B_{n+m}(A)$ by
\begin{align*}
(f \oplus g)(x_1,\dots x_{n+m}) = (&f_1(x_1,\dots,x_n), \dots, f_n(x_1,\dots,x_n), \\
& g_1(x_{n+1},\dots,x_{n+m}), \dots,g_m(x_{n+1},\dots,x_{n+m})).
\end{align*}
Then the \emph{extensions} of $f\in B_\ell(A)$ on $A^n$ are the rewirings of $f\oplus \id_{n-\ell}$.

Let $P\subseteq B(A)$. We denote by $\clone{P}\subseteq B(A)$ the set of gates that can be obtained
from the identity $\id_1$ and the elements of $P$ by compositions of gates of equal arity and by
extensions of gates of arities $\ell$ on $A^n$, for  $n\geq \ell$. Clearly $P\mapsto \clone{P}$
is a closure operator. Sets $P\subseteq B(A)$ such that $P=\clone{P}$ are called \what{}s.
We say that $P$ \emph{generates} \what{} $C$ if $C=\clone{P}$. We say that  \what{} $C$ is \emph{finitely generated}
if there exists a finite set $P$ that generates it.

To relate the concepts to clone theory, one defines the generalized compositions of permutations of arbitrary arities
as follows: Let $f\in B_n(A)$ and $g\in B_m(A)$. For $k\leq \min(m,n)$, let
$f \circ_k g \in B_{n+m-k}(A)$ be defined by
\begin{align*}
\nonumber
 f \circ_k g = (&g_1(f_1(x_1,\dots,x_n),\dots,f_k(x_1,\dots,x_n),x_{n+1},\dots,x_{n+m-k}),\dots, \\
    &g_m(f_1(x_1,\dots,x_n),\dots,f_k(x_1,\dots,x_n),x_{n+1},\dots,x_{n+m-k}), \nonumber\\
      &f_{k+1}(x_1,\dots,x_n),\dots,f_n(x_1,\dots,x_n))
\end{align*}
If $n=m=k$ this is the usual composition $ f \circ g$.
We call $(B(A); \{\oplus, \circ,\pi_\alpha \;\vert\; \exists n \in \N: \alpha \in S_n \})$ the \emph{full reversible clone on $A$}
and any subalgebra a reversible clone on $A$, or simply a \emph{revclone}.\footnote{In this paper, we are more concerned with the set of functions in a revital or revclone, rather than the particular signatures chosen, and thus have chosen this revclone signature due to its (apparent) simplicity -- in clone theory, finite signatures are preferred, see \cite{boykett15} for such a revclone signature.}
Every revclone is a \what{} and, in fact, revclones are precisely the \what{}s that contain all wire permutations
$\pi_\alpha$ or, equivalently, the \what{}s that contain the wire permutation $\pi_{(1\;
2)}\in B_2(A)$ that swaps two wires. Note that $\clone{\pi_{(1\; 2)}}$ is exactly the set of wire
permutations.
It follows that if $P$ generates $C$ as a revclone, then $P'=P\cup\{\pi_{(1\; 2)}\}$ generates it as a
\what{}, so there is no difference in the finitely-generatedness of
a revclone when we consider it as a \what{} instead of a revclone.

 We sometimes refer to general elements of $B_n(A)$ as \emph{word permutations} to distinguish them from the wire permutations.
 In particular, by a wire swap we refer to a function $f : A^2 \to A^2$ with $f(a,b) = (b,a)$ for all $a, b \in A$ (or an extension of such a function),
 while a word swap refers to a permutation $(u \; v) \in B_n(A)$ that swaps two individual words of the same length.
 Of course, a wire swap is a composition of word swaps, but the converse is not true. Similarly, and more generally,
 we talk about \emph{wire and word rotations}. A \emph{symbol permutation} is a permutation of $A$.

We are interested in finding out if some naturally arising \what{}s are finitely generated. First of all, we have the \emph{full \what{}}
$B(A)$ and the \emph {alternating \what{}} $\Cthree = \bigcup_n \Alt(A^n)$ that contains all even
permutations.

We also consider permutations that conserve the letters in their inputs.
For any $n\in \N$, define $w_n:A^n\rightarrow \N^A$, such that for all $x\in A^n$, $a\in A$,
$w_n(x)(a)$ the number of occurences of $a$ in $x$. We say $w_n(u)$ is the \emph{weight} of the word $u$.
A mapping $f\in B_n(A)$ is \emph{conservative} if for all $x\in A^n$, $w_n(f(x)) = w_n(x)$, we let $Cons_n(A)\subseteq B_n(A)$ be the set of conservative maps of arity $n$.
Then $\Ctwo = \cup_{n\in \N}Cons_n(A)$ is the \emph{conservative \what}.
We also consider the set of conservative permutations
that perform an even permutation on each weight class, denoted by $\Cfour$, called the \emph{alternating conservative revital}.

A wire swap $\alpha$, on $A^n$, has parity $\frac{\vert A\vert (\vert A\vert-1)}{2}|A|^{n-2}$. When $n = 2$, this is even only when $|A| \equiv 0$ or $|A| \equiv 1 \pmod 4$. It follows that $\Cthree$ is a revclone only when $|A| \equiv 0$ or $|A| \equiv 1 \pmod 4$.
The \what{} $\Cfour$ is never a revclone because swaps are odd permutations on the words with a single symbol different from the others.

Furthermore, for any $k\in \N$, we can define the mappings that are \emph{conservative modulo $k$}
by replacing $\N$ with $\Z_k$ in the above definition.
We will write $Mod_k(A)$ for these maps.



Using the terminology in~\cite{xu15}, we say that gate
$f\oplus\id_k\in B_{n+k}(A)$ computes $f\in B_n(A)$ using $k$
\emph{borrowed} bits. The borrowed bits are auxiliary symbols in the computation of $f$
that can have arbitrary initial values, and at the end these values
must be restored unaltered. Regardless of the initial values of the borrowed bits, the permutation $f$ is computed on the
other $n$ inputs. We have cases where borrowed bits help
(Corollary~\ref{cor:borrowedcor}) and cases where they don't (Theorem~\ref{thm:XuGeneralization}).

A \emph{hypergraph} is a set $V$ of vertices and a set $E$ of edges, $E \subseteq {\cal P}(V)$.
A $k$-hypergraph is a hypergraph where every edge has the same size, $k$.
A 2-hypergraph is a standard (undirected) graph.
A \emph{path} is a series of vertives $(v_1,\dots,v_n)$ such that for each pair $(v_i,v_{i+1})$ there is
an edge $e_i \in E$ such that $\{v_i,v_{i+1}\} \subseteq e_i$.
Two vertices $a,b\in V$ are \emph{connected} if there is a path $(v_1,\dots,v_n)$ with $v_1=a$ and $v_n=b$.
The relation of being connected is an equivalence relation and induces a partition of the vertices into \emph{connected components}.

If $H$ is a $3$-hypergraph, write $Graph(H)$ for the underlying graph of $H$:
$V(Graph(H)) = V(H)$ and $(a,b) \in E(Graph(H)) \iff \exists c: (a,b,c) \in E(H)$. Note that by our definition, the connected components of a $3$-hypergraph $H$ are precisely the connected components of $Graph(H)$.

\section{Induction Lemma}

In this section, we introduce the concept of controlled gate, a generalisation of the Toffoli and Fredkin gates.
With this definition, we are able to formulate a useful induction lemma.
This lemma formalizes  the following idea.
If we can build an $(n+1)$-ary controlled gate in a certain class from
gates of arity $n$, then by replacing each $n$-ary gate with its $(n+1)$-ary extension, we have a ``spare''
control line from each $n+1$ gate, which can then be attached to an extra control input to get an $(n+2)$-ary gate.


\begin{definition}
Let $k \in \N$ and $P \subseteq B_\ell(A)$. For $w \in A^k$ and $p \in P$, define the function
$\controlled{w}{p} : A^{k+\ell} \to A^{k+\ell}$ by
\[ \controlled{w}{p}(uv) = \left\{\begin{array}{ll}
uv & \mbox{if } u \neq w \\
u p(v) & \mbox{if } u = w
\end{array}\right. \]
where $u \in A^k$, $v \in A^\ell$.
The functions $\controlled{w}{p}$, and more generally their rewirings $\pi_\alpha\circ\controlled{w}{p}\circ\pi_\alpha^{-1}$ for
$\alpha\in S_{k+\ell}$,
are called \emph{$k$-controlled $P$-permutations},
and we denote this set of functions by $CP(k,P) \subseteq B_{k+\ell}(A)$.
We refer to $CP(P) = \bigcup_k CP(k,P)$ as \emph{controlled $P$-permutations}.
\end{definition}

When $P$ is a named family of permutations, such as the family of all swaps, we usually talk about `$k$-controlled swaps' instead of `controlled swap permutations'.
The Toffoli gate is a (particular) $2$-controlled symbol permutation, while the Fredkin gate is a (particular) $1$-controlled wire swap. Note that the `$k$' in `$k$-controlled' refers to the fact that the number of controlling bits is $k$. 
Of course, sometimes we want to talk about also the particular word $w$ in $\controlled{w}{p}(uv)$. To avoid ambiguity, we say such $\controlled{w}{p}(uv)$ is \emph{$w$-word controlled permutation}. In particular, the Toffoli gate is the $11$-word controlled symbol permutation, while the Fredkin gate is a $1$-word controlled wire swap.

The following lemma formalizes the idea of adding new common control wires to all gates
in a circuit.

\begin{lemma}
\label{lem:ExtraWireLemma}
Let $k,h,\ell \in \N$, $P \subseteq B_{\ell}(A)$ and $Q \subseteq B_n(A)$. If $CP(h,Q) \subseteq \clone{CP(k,P)}$, then
$CP(h+m,Q) \subseteq\clone{CP(k+m,P)}$ for all $m \in \N$.
\end{lemma}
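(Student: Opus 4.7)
The plan is to induct on $m$. The base case $m = 0$ is exactly the given hypothesis. For the inductive step, I reduce to the single-step claim: \emph{if $\cp{h}{Q} \subseteq \clone{\cp{k}{P}}$, then $\cp{h+1}{Q} \subseteq \clone{\cp{k+1}{P}}$}, applied with $h+m$ and $k+m$ in place of $h$ and $k$. Because rewirings are built into the definitions of both $\cp{\cdot}{\cdot}$ and $\clone{\cdot}$, it suffices to exhibit, for arbitrary $w' \in A^h$, $a \in A$ and $q \in Q$, a circuit in $\clone{\cp{k+1}{P}}$ computing the specific gate $\controlled{w'a}{q}$.

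By the single-step hypothesis, $\controlled{w'}{q}$ is realized by some circuit $C = \sigma_1 \circ \cdots \circ \sigma_r$ on $A^N$, where each $\sigma_i$ is an extension on $A^N$ of some $f_i = \controlled{v_i}{p_i} \in \cp{k}{P}$. I build a new circuit $C'$ on $A^{N+1}$ by adding one fresh wire and replacing each factor $\sigma_i$ by a factor $\sigma_i'$ that acts as $\sigma_i$ on the original $N$ wires when the fresh wire carries the symbol $a$, and as the identity otherwise. The key structural observation is that such a $\sigma_i'$ is itself an extension on $A^{N+1}$ of $\controlled{v_i a}{p_i} \in \cp{k+1}{P}$: simply route the new shared wire into the appended control slot of $\controlled{v_i a}{p_i}$ and keep the remaining wiring as in $\sigma_i$. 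Hence each $\sigma_i' \in \clone{\cp{k+1}{P}}$, and therefore $C' = \sigma_1' \circ \cdots \circ \sigma_r' \in \clone{\cp{k+1}{P}}$.

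It remains to verify that $C'$ equals $\controlled{w'a}{q}$, which I do by casing on the value $y$ of the new wire. If $y = a$, each $\sigma_i'$ agrees with $\sigma_i$ on the other coordinates, so $C'$ restricts to $C = \controlled{w'}{q}$ there; if $y \neq a$, each $\sigma_i'$ is the identity, so $C'$ is the identity. In both cases the effect of $C'$ is to apply $q$ to the designated output wires precisely when the first $h$ control wires read $w'$ and the new wire reads $a$, which is exactly $\controlled{w'a}{q}$. The main bookkeeping obstacle is just this verification: that the ``append-the-shared-$a$-wire-to-every-control-word'' construction is carried out uniformly across all $\sigma_i$ (same fresh wire, same symbol $a$) and really produces an extension of a $\cp{k+1}{P}$-element rather than some ad hoc function. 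Once that is pinned down, the inductive step is by inspection, and iterating it delivers the claim for every $m$.
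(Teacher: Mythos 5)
Your proposal is correct and uses essentially the same construction as the paper: take a circuit of extensions of $CP(k,P)$-gates implementing the given controlled permutation and append the extra control input to every gate's control word, with the careful verification (casing on whether the shared wire reads the control symbol) that the paper leaves implicit. The only cosmetic difference is that you add one control wire at a time and induct on $m$, whereas the paper appends the whole length-$m$ control word $u$ to all control words in a single step.
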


\begin{proof}
Consider an arbitrary
$f \in CP(h+m,Q)$. Let $uv \in A^{h+m}$ be its control word where $u\in A^m$ and $v\in A^h$, and let $p\in Q$ be its permutation.
By the hypothesis, $f_{v,p}$ 
can be implemented by maps in $CP(k,P)$.
In all their control words, add the additional input $u$. This implements $f$ as a composition of maps in $CP(k+m,P)$, as required.
\qed
\end{proof}

The main importance of the lemma comes from the following corollary:

\begin{lemma}[Induction Lemma]
\label{lem:InductionLemma}
Let  $P \subseteq B_{\ell}(A)$ be such that
$CP(k+1,P) \subseteq \clone{CP(k,P)}$ for some
$k \in \N$. Then $\clone{CP(m,P)} \subseteq \clone{CP(n,P)}$ for all $m\geq n\geq k$.
\end{lemma}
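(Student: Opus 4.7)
The plan is to combine the Extra Wire Lemma with straightforward induction on $m$. The hypothesis gives one step up from $k$ to $k+1$; the Extra Wire Lemma lets us shift this single-step improvement to every arity $j \geq k$, after which iterated application closes all higher arities under $\clone{CP(n,P)}$.

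First I would invoke Lemma~\ref{lem:ExtraWireLemma} with $Q = P$, $h = k+1$, and the given $k$. The hypothesis $CP(k+1,P) \subseteq \clone{CP(k,P)}$ exactly matches the premise, and the conclusion, instantiated with a free parameter $j \in \N$ in place of $m$, reads $CP(k+1+j, P) \subseteq \clone{CP(k+j, P)}$. Relabelling $i = k+j$, this says
\[
CP(i+1, P) \;\subseteq\; \clone{CP(i, P)} \qquad \text{for every } i \geq k.
\]

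Next I would induct on $m \geq n$, with $n \geq k$ fixed. The base case $m = n$ is immediate since $\clone{CP(n,P)} \subseteq \clone{CP(n,P)}$. For the inductive step, suppose $\clone{CP(m,P)} \subseteq \clone{CP(n,P)}$ with $m \geq n \geq k$. Applying the displayed inclusion with $i = m$ (valid since $m \geq k$) gives $CP(m+1, P) \subseteq \clone{CP(m, P)}$. Since $\clone{\cdot}$ is a closure operator (monotone and idempotent), this yields
\[
\clone{CP(m+1,P)} \;\subseteq\; \clone{\clone{CP(m,P)}} \;=\; \clone{CP(m,P)} \;\subseteq\; \clone{CP(n,P)},
\]
completing the induction.

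There is no real obstacle here; the only point that requires a moment of care is verifying that the Extra Wire Lemma can be iterated rather than just applied once. That is handled implicitly by re-invoking the ``$i$ to $i+1$'' statement for each step of the induction, which is legitimate because the conclusion of the Extra Wire Lemma was established uniformly for all $j \in \N$. The use of idempotence of $\clone{\cdot}$ to absorb the nested closure is also essential and silently uses that $\clone{CP(m,P)}$ is already a \what{}, so adding controlled $P$-permutations of arity $m+1$ that are themselves built inside it contributes nothing new.
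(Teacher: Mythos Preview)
Your proof is correct and follows essentially the same route as the paper's: apply Lemma~\ref{lem:ExtraWireLemma} with $Q=P$ and $h=k+1$ to obtain $CP(i+1,P)\subseteq\clone{CP(i,P)}$ for all $i\geq k$, then use that $\clone{\cdot}$ is a closure operator to get the descending chain $\clone{CP(k,P)}\supseteq\clone{CP(k+1,P)}\supseteq\cdots$, which gives the claim. The only cosmetic difference is that you phrase the last step as an explicit induction on $m$ while the paper just writes the chain.
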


\begin{proof}
We apply Lemma~\ref{lem:ExtraWireLemma}, setting $Q=P$ and $h=k+1$.
We obtain that
$CP(k+m+1,P) \subseteq \clone{CP(k+m,P)}$ for all $m \in \N$. As $\clone{\cdot}$ is a closure operator we have that
$\clone{CP(k+m+1,P)} \subseteq \clone{CP(k+m,P)}$ for all $m \in \N$. Hence
$$
\clone{CP(k,P)} \supseteq \clone{CP(k+1,P)} \supseteq \clone{CP(k+2,P)} \supseteq \dots
$$
which clearly implies the claimed result.
\qed
\end{proof}

By the previous lemma, in order to show that a \what{} $C$ is finitely generated, it is sufficient to find
some $P \subseteq B_\ell(A)$ such that
\begin{enumerate}
\item[(i)] $\gen{CP(m,P)} = C\cap B_{m+\ell}(A)$ for all large enough $m$, and
\item[(ii)] $CP(k+1,P) \subseteq \clone{CP(k,P)}$ for some $k$.
\end{enumerate}
Indeed, if $n\geq k$ is such that (i) holds for all $m\geq n$ then,
$$
C\cap B_{m+\ell}(A) = \gen{CP(m,P)} \subseteq \clone{CP(m,P)} \subseteq \clone{CP(n,P)},
$$
where the last inclusion follows from (ii) and the Induction lemma.
Note that by (i) we also have $CP(n,P)\subseteq C$.
So the finite subset $CP(n,P)$ of $C$ generates all but finitely many elements of $C$.

Condition (i) motivates the following definition.
\begin{definition}
Let $C$ be a \what{}. We say that a set of permutations $P \subseteq B_\ell(A)$ is \emph{$n$-control-universal}
for $C$ if $\gen{CP(n-\ell, P)} = C \cap B_n(A)$. More generally, a set $P \subseteq B(A)$ that may contain gates of
different arities, is $n$-control-universal for $C$ if
$$\gen{\bigcup_{\ell}\bigcup_{f\in B_\ell(A)\cap P} CP(n-\ell, P)} = C \cap B_n(A).$$
If $P$ is $n$-control-universal
for all large enough $n$, we say it is \emph{control-universal} for $C$.
\end{definition}

In the next two sections we find gate sets that are control-universal for \what{}s of interest.


\section{Some combinatorial group theory}

In this section, we prove some basic results  that the symmetric group is generated
by any `connected' family of swaps, and the alternating group by any `connected' family of $3$-cycles.
Similar results are folklore in combinatorial group theory, but we include full proofs for completeness' sake.

Let $H$ be a graph with nodes $V(H)$ and edges $E(H)$.
The \emph{swap group} $SG(H)$ is the group $G \leq \Sym(V(H))$
generated by swaps $(a \; b)$ with $(a,b) \in E(H)$.

\begin{lemma}
\label{lem:ConCompSym}
Let $H$ be a graph with connected components $H_1, \ldots, H_k$. Then
\[ SG(H) = \Sym(V(H_1)) \times \cdots \times \Sym(V(H_k)) \]
\end{lemma}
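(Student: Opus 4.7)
The plan is to prove the two inclusions separately. For the easy direction $SG(H) \subseteq \Sym(V(H_1)) \times \cdots \times \Sym(V(H_k))$, I would note that every generating swap $(a\;b)$ with $(a,b)\in E(H)$ has both endpoints inside the same connected component $H_i$, so it fixes all vertices outside $V(H_i)$ pointwise. Hence every generator lies in the product $\Sym(V(H_1)) \times \cdots \times \Sym(V(H_k))$ (viewing each factor as a subgroup of $\Sym(V(H))$ acting trivially on the other components), and therefore so does the whole group $SG(H)$.

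For the reverse inclusion, it suffices to show that $\Sym(V(H_i)) \subseteq SG(H)$ for each $i$, since the right-hand side is generated by these subgroups. Fixing $i$, I can restrict attention to the connected graph $H_i$: every edge of $H$ incident to a vertex of $V(H_i)$ lies inside $H_i$, so the swaps available inside $V(H_i)$ are exactly those from $E(H_i)$. So the task reduces to the following: if $G$ is a connected graph, then the swaps $(a\;b)$ with $(a,b)\in E(G)$ generate $\Sym(V(G))$.

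To prove this, I would show that every transposition $(a\;b)$ on $V(G)$ lies in $SG(G)$, since transpositions generate $\Sym(V(G))$. Given $a,b\in V(G)$, pick a path $a=v_0, v_1, \ldots, v_n=b$ in $G$. Then the standard conjugation identity
\[ (v_0 \; v_n) = (v_{n-1}\;v_n)\, (v_0\;v_{n-1})\, (v_{n-1}\;v_n) \]
lets me induct on $n$: the base case $n=1$ is an edge-swap and hence a generator, and the inductive step expresses $(v_0\;v_n)$ in terms of the edge-swap $(v_{n-1}\;v_n)$ and the transposition $(v_0\;v_{n-1})$ obtained along a shorter path, which is already in $SG(G)$ by induction. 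Thus every transposition on $V(G)$ is in $SG(G)$, completing the inclusion.

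There is no serious obstacle here; the proof is essentially a textbook argument. The only thing that needs a bit of care is the bookkeeping that the partition of $V(H)$ into the $V(H_i)$ makes the two inclusions match up with the direct product structure, but this is immediate since the generating swaps respect the partition.
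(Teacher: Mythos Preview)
Your proof is correct and follows essentially the same approach as the paper: both directions of the inclusion, with the nontrivial one handled by showing that any transposition $(a\;b)$ within a connected component can be obtained from edge-swaps along a path via conjugation. The paper writes out the resulting product of edge-swaps explicitly rather than phrasing it as an induction, but the underlying argument is the same.
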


\begin{proof}
All of the swaps act in one of the components and there are no relations between them.
Thus, the swap group will be the direct product of some permutation groups of the connected components.
We only need to show that in each connected component $H_i$, we can realize any permutation.
Since swaps generate the symmetric group, it is enough to show that if $a, b \in V(H_i)$ then
the swap $(a \; b)$ is in $SG(H)$. For this, let $a = a_0, a_1, a_2, \ldots, a_\ell = b$ be a
path from $a$ to $b$. Then
\[ (a,b) = (a_1 \; a_2) \cdots (a_{\ell-3} \; a_{\ell-2}) (a_{\ell-2} \; a_{\ell-1}) (a_{\ell} \; a_{\ell-1}) \cdots (a_3 \; a_2) (a_2 \; a_1). \]
\qed
\end{proof}


Let $H$ be a 3-hypergraph with nodes $V(H)$ and undirected edges $E(H)$. 
The \emph{cycling group} $CG(H)$ of $H$
is the group $G \leq \Sym(V(H))$ generated by cycles $(a \; b \; c)$ where $(a,b,c) \in E(H)$. 

The following observation allows us to  take any element of the alternating group given
two 3-hyperedges that intersect in one or two places.

\begin{lemma}
$$
\begin{array}{rcl}
A_4 &=& \gen{(1 \; 2 \; 3), (2 \; 3 \; 4)},\\
A_5 &=& \gen{(1 \; 2 \; 3), (3 \; 4 \; 5)}.
\end{array}
$$
\end{lemma}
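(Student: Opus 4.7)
The plan is to prove each equality by an order argument: show that the subgroup $H$ generated by the two given $3$-cycles contains an element whose order, combined with $3$, forces $|H|$ to be large enough that the only possibility in $A_4$ (respectively $A_5$) is the whole group.

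For $A_4 = \langle (1\;2\;3), (2\;3\;4)\rangle$, I would first note that both generators are even and lie in $A_4$. Composing them gives a double transposition (e.g.\ $(1\;2\;3)(2\;3\;4) = (1\;3)(2\;4)$ in the left-to-right convention used in the paper), which is an element of order $2$. Since $H$ now contains elements of orders $2$ and $3$, Lagrange forces $6 \mid |H|$ and $|H| \mid 12$. The standard fact that $A_4$ has no subgroup of order $6$ (its proper subgroups are the trivial group, three copies of $\mathbb{Z}/2$, four copies of $\mathbb{Z}/3$, and the Klein four-group $V_4$) then yields $|H| = 12$, so $H = A_4$.

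For $A_5 = \langle (1\;2\;3), (3\;4\;5)\rangle$, the same strategy applies but with a $5$-cycle in place of the double transposition. A direct calculation gives $(1\;2\;3)(3\;4\;5) = (1\;2\;4\;5\;3)$, an element of order $5$. Hence the subgroup $H$ contains elements of orders $3$ and $5$, so $15 \mid |H|$ and $|H| \mid 60$, leaving $|H| \in \{15, 30, 60\}$. A subgroup of order $15$ must be cyclic (every group of order $pq$ with $p < q$ and $p \nmid q-1$ is cyclic), but $A_5$ contains no element of order $15$, ruling this case out. A subgroup of order $30$ would have index $2$ and hence be normal, contradicting the simplicity of $A_5$. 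Therefore $|H| = 60$ and $H = A_5$.

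The only nontrivial input to this plan is the structural knowledge that $A_4$ has no subgroup of order $6$ and that $A_5$ is simple with no element of order $15$; both are standard, so the main thing to be careful about is merely the composition convention (functions compose left-to-right in this paper), which affects the specific products computed but not the orders. Overall the proof is a short two-line order count in each case after exhibiting one well-chosen product.
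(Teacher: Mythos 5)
Your proof is correct, and it is worth noting that the paper itself gives \emph{no} proof of this lemma: it is stated bare, as the one piece of folklore in the combinatorial group theory section left to the reader, so your argument fills a gap rather than diverging from an in-paper proof. Your computations check out in the paper's left-to-right convention --- $(1\;2\;3)(2\;3\;4) = (1\;3)(2\;4)$ has order $2$, and $(1\;2\;3)(3\;4\;5) = (1\;2\;4\;5\;3)$ has order $5$ --- and the structural facts you invoke (Lagrange, $A_4$ having no subgroup of order $6$, cyclicity of groups of order $15$ via $3 \nmid 4$, simplicity of $A_5$ ruling out an index-$2$ subgroup) are all standard and correctly applied. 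The more common folklore route, and the one most consonant with how the lemma is actually consumed in Lemma~\ref{lem:ConCompAlt}, is generative rather than order-theoretic: one conjugates the given $3$-cycles by each other to manufacture enough $3$-cycles (for instance all $3$-cycles through a fixed point) and then uses the fact that $3$-cycles generate the alternating group. That approach is constructive and scales to $A_n$ for arbitrary $n$, whereas your divisibility count is shorter for these two specific groups but does not generalize: for $n \geq 6$ the constraints ``contains elements of orders $3$ and $5$ (or $2$ and $3$) and divides $|A_n|$'' no longer pin down the subgroup (e.g.\ a point stabilizer $A_{n-1} < A_n$ satisfies them). For the lemma as stated, either argument is fully adequate, and yours has the virtue of requiring no case analysis on which $3$-cycles arise as conjugates.
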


\begin{lemma}
\label{lem:ConCompAlt}
Let $H$ be a hypergraph, and let the connected components of $H$ be $H_1, \ldots, H_k$. Then
\[ CG(H) = \Alt(V(H_1)) \times \Alt(V(H_2)) \times \cdots \times \Alt(V(H_k)). \]
\end{lemma}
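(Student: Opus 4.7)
The inclusion $CG(H) \subseteq \Alt(V(H_1)) \times \cdots \times \Alt(V(H_k))$ is the easy direction. Each generator $(a\;b\;c)$ is an even permutation, and since the hyperedge $\{a,b,c\}$ lies inside a single connected component, the generator fixes every vertex in the other components. So $CG(H)$ factors as a subgroup of the stated product. For the reverse inclusion, generators coming from different components commute (they act on disjoint vertex sets), so it suffices to prove $CG(H_i) = \Alt(V(H_i))$ for each $i$. Thus I would reduce immediately to the case of a connected $3$-hypergraph $H$, with the goal of showing $\Alt(V(H)) \subseteq CG(H)$.

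For connected $H$, the plan is a greedy induction. Start with any hyperedge $e_1 = \{a,b,c\}$ and set $U_1 = e_1$, so trivially $\Alt(U_1) = \langle (a\;b\;c) \rangle \subseteq CG(H)$. Given $U_k$ with $\Alt(U_k) \subseteq CG(H)$ and $U_k \neq V(H)$, use connectedness to find a hyperedge $e$ bridging $U_k$ and $V(H) \setminus U_k$: by looking at any path from a vertex outside $U_k$ to a vertex inside, the first step crossing the boundary supplies such an $e$. The set $U_{k+1} := U_k \cup e$ then adds either one or two new vertices.

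The step $\Alt(U_k) \subseteq CG(H) \Rightarrow \Alt(U_{k+1}) \subseteq CG(H)$ splits into two cases depending on $|e \cap U_k|$. If $|e \cap U_k| = 2$, say $e = \{a,b,c\}$ with $a,b \in U_k$ and $c \notin U_k$, then combining the hyperedge $3$-cycle $(a\;b\;c)$ with $(a\;b\;d)$ for any $d \in U_k \setminus \{a,b\}$ (an element of $\Alt(U_k)$) yields all of $\Alt(\{a,b,c,d\})$ by the $A_4$ part of the previous lemma. If $|e \cap U_k| = 1$, say $e = \{a,b,c\}$ with $a \in U_k$ and $b,c \notin U_k$, then combining $(a\;b\;c)$ with $(d\;d'\;a)$ for two $d,d' \in U_k \setminus \{a\}$ generates $\Alt(\{a,b,c,d,d'\})$ by the $A_5$ part. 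In both cases I then obtain every $3$-cycle touching the new vertices by conjugating these newly-obtained $3$-cycles with elements of $\Alt(U_k)$ (which fix the new vertices pointwise) and invoking the $2$-transitivity of $\Alt(U_k)$ on pairs from $U_k$; this yields every $3$-cycle on $U_{k+1}$ and hence $\Alt(U_{k+1}) \subseteq CG(H)$.

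The main obstacle is administrative rather than conceptual: ensuring that $|U_k| \geq 3$ throughout so that the auxiliary vertices $d$ (or $d,d'$) exist, and carefully checking in the conjugation argument that the orbits of the conjugation action of $\Alt(U_k)$ on ordered pairs in $U_k$ are transitive enough to realize all required $3$-cycles on $U_{k+1}$. The small-component base cases (an isolated vertex, or a single hyperedge giving $\Alt_3 \cong \Z_3$) must also be noted separately, but they are immediate. Once these are handled, finiteness of $V(H)$ forces $U_k = V(H)$ after finitely many steps, completing the proof.
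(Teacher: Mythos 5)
Your proof is correct, but it takes a genuinely different route from the paper's. The paper inducts on the number of \emph{hyperedges}: assuming the claim for $H'$, it adds one hyperedge $(a,b,c)$, which may merge up to three components, and then shows every product of two swaps $(x \; y)(u \; v)$ in the merged component is realizable, with the casework ultimately reducing to the same $A_4$/$A_5$ lemma you invoke. You instead dispose of the component bookkeeping once at the start (disjoint supports commute, so it suffices to treat $H$ connected) and then induct on a growing vertex set $U_k$, maintaining the invariant $\Alt(U_k) \subseteq CG(H)$ and absorbing one or two new vertices per step via a bridging hyperedge --- a spanning-growth argument rather than an edge-deletion argument. Your version buys a cleaner inductive invariant and avoids the paper's somewhat sketchy ``by analysing cases'' step, at the cost of the conjugation bookkeeping you flag; the paper's version never needs transitivity claims about $\Alt(U_k)$ but must handle the merging of components inside the induction. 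Your flagged caveats do all resolve: $|U_k| \geq 3$ holds since $U_1$ is a hyperedge and $U_k$ only grows; when $|U_k| = 3$ the failure of $2$-transitivity of $\Alt_3$ is harmless because in that case the $A_4$ (resp.\ $A_5$) lemma already yields $\Alt(U_{k+1})$ outright, as $U_{k+1}$ equals the union of the two cycle supports; and for $|U_k| \geq 4$, $2$-transitivity of $\Alt(U_k)$ gives all $3$-cycles $(x \; y \; c)$ with $x, y \in U_k$, plain transitivity gives the cycles $(x \; b \; c)$ through both new vertices, and inverses handle the remaining orientations, so all $3$-cycles on $U_{k+1}$ are obtained and these generate $\Alt(U_{k+1})$.
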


\begin{proof}
We prove the claim by induction on the number of hyperedges.
If there are no hyperedges, then  $CG(H) = \{\ID(V(H))\}$, as required.
Now, suppose that the claim holds for a hypergraph $H'$ and $H$ is obtained from $H'$ by adding a new
hyperedge $(a, b, c)$. If none of $a, b, c$ are part of a hyperedge of $H'$ or are fully contained
in a connected component of $Graph(H')$, then the claim is trivial, as either we add a new connected
component and by definition add its alternating group $\Alt_3 \cong \gen{(a,b,c)}$ to $CG(H)$,
or we do not modify the connected components at all.

Every permutation on the right side of the equality we want to prove decomposes into even permutations in
the components. In components that do not intersect $\{a, b, c\}$, we can implement this permutation by
assumption. We thus only have to show that a pair of swaps $(x \; y) (u \; v)$ can be implemented.
If $x,y,u,v \in \{a,b,c\}$, the permutation is in $CG(H)$ by definition.
Since $(x \; y) (u \; v) = (x \; y) (a \; b)^2 (u \; v)$ it is enough to implement the permutation $(a \; b) (u \; v)$.

Now, we have two cases (up to reordering variables). Either $u \in \{a,b,c\}$ and
$v \notin \{a,b,c\}$ or $\{u,v\} \cap \{a,b,c\} = \emptyset$. 
By analysing cases, the
claim  reduces to the $\Alt_5$ or the $\Alt_4$ situation of the previous Lemma.
\qed
\end{proof}

\section{Control-universality}
\label{sec:ControlUniversality}

As corollaries of the previous section, we will now find control-universal families of gates for our  \what{}s of interest:
the full \what{} $\Cone = \bigcup_n \sym(A^n)$, the conservative \what{} $\Ctwo$, the alternating \what{}
$\Cthree=\bigcup_n \alt(A^n)$ and the alternating conservative \what{} $\Cfour$. Corollaries~\ref{cor:P1},
~\ref{cor_consuniversal}, ~\ref{cor:corollaryP3} and \ref{cor:Something} below provide control-universal gate sets
for these \what{}s.
\medskip

\noindent
{\bf a) The full \what{} $\Cone$}.
Define the graph $G^{(1)}_{A,n}$ that has nodes $A^n$ and edges $(u, v)$ where the Hamming distance between $u$ and $v$ is one.

\begin{lemma}
The graph $G^{(1)}_{A,n}$ is connected.
\end{lemma}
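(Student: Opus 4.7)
The plan is to prove connectedness by exhibiting an explicit path between any two vertices, changing one coordinate at a time.

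Fix arbitrary $u = (u_1, \ldots, u_n)$ and $v = (v_1, \ldots, v_n)$ in $A^n$. I will define a sequence of words $w_0, w_1, \ldots, w_n$ by
\[ w_i = (v_1, \ldots, v_i, u_{i+1}, \ldots, u_n), \]
so that $w_0 = u$ and $w_n = v$. For each $i \in \{1, \ldots, n\}$, the words $w_{i-1}$ and $w_i$ agree on every coordinate except possibly the $i$-th, where $w_{i-1}$ has $u_i$ and $w_i$ has $v_i$. Hence their Hamming distance is either $0$ (in which case they are equal) or $1$ (in which case they are joined by an edge of $G^{(1)}_{A,n}$). Removing repeated entries yields a path from $u$ to $v$.

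Since $u$ and $v$ were arbitrary, every pair of vertices is connected, so $G^{(1)}_{A,n}$ has a single connected component. There is no real obstacle here; the argument is essentially a one-liner, and the only thing to be careful about is to allow coincidences $u_i = v_i$ (so that consecutive $w_i$'s may be equal), which is handled by the trivial observation above.
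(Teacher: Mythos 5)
Your proof is correct: the coordinate-by-coordinate interpolation $w_0, w_1, \ldots, w_n$, with consecutive words at Hamming distance at most one and repetitions removed, is exactly the standard argument, and the paper itself omits the proof of this lemma as immediate. Your care about the case $u_i = v_i$ is the only subtlety, and you handle it correctly.
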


Let  $P_1 = \{(a \; b) \;|\; a, b \in A\} \subseteq B_1(A)$, the set of symbol swaps. The swap group of
$G^{(1)}_{A,n}$ is then $\gen{CP(n-1,P_1)}$ so,
by Lemma~\ref{lem:ConCompSym}, we have the following:

\begin{corollary}
\label{cor:P1}
For all $n$, $P_1$ is $n$-control-universal for the \what{} $\Cone$.
\end{corollary}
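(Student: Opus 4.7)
The plan is to verify the corollary essentially by unfolding definitions and composing the tools that have already been set up. First I would identify explicitly what the elements of $CP(n-1,P_1)$ look like. Since $P_1 \subseteq B_1(A)$, an element of $CP(n-1,P_1)$ is (a rewiring of) some $f_{w,p}$ with $w \in A^{n-1}$ and $p = (a\;b)$ a symbol swap. By definition, such a gate acts trivially unless $n-1$ designated coordinates match the control word $w$, in which case it swaps the two values $a$ and $b$ at the remaining coordinate. Ranging over all rewirings $\alpha \in S_n$, all control words $w \in A^{n-1}$ and all swaps $(a\;b) \in P_1$, this is exactly the collection of word swaps $(u\;v)$ with $u,v \in A^n$ at Hamming distance one. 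Hence $\gen{CP(n-1,P_1)}$ coincides, as a subgroup of $\Sym(A^n)$, with the swap group $SG(G^{(1)}_{A,n})$.

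Next I would dispatch the preceding (unproved) Lemma that $G^{(1)}_{A,n}$ is connected. This is immediate: given any $u,v \in A^n$, the sequence of words obtained by changing the coordinates of $u$ to those of $v$ one position at a time is a path of Hamming-distance-one steps from $u$ to $v$ in $G^{(1)}_{A,n}$.

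Finally I would invoke Lemma~\ref{lem:ConCompSym} applied to the connected graph $G^{(1)}_{A,n}$, concluding
\[ \gen{CP(n-1,P_1)} \;=\; SG(G^{(1)}_{A,n}) \;=\; \Sym(V(G^{(1)}_{A,n})) \;=\; \Sym(A^n) \;=\; \Cone \cap B_n(A), \]
which is exactly the $n$-control-universality of $P_1$ for $\Cone$.

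There is no real obstacle here; the corollary is essentially a bookkeeping step that shows the abstract machinery (controlled permutations and the swap-group lemma) captures the concrete situation. The only tiny care needed is in the first step, matching the $(n-1)$-controlled symbol permutations with the edges of $G^{(1)}_{A,n}$, since one must remember that rewirings allow the ``free'' coordinate to sit in any of the $n$ positions and that the control word ranges over all of $A^{n-1}$, giving every Hamming-one pair.
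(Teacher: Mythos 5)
Your proposal is correct and follows essentially the same route as the paper: identifying $CP(n-1,P_1)$ with the word swaps along edges of $G^{(1)}_{A,n}$, observing connectivity (for which you supply the obvious coordinate-by-coordinate path the paper leaves implicit), and applying Lemma~\ref{lem:ConCompSym} to conclude $\gen{CP(n-1,P_1)} = \Sym(A^n)$. Nothing further is needed.
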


\medskip

\noindent
{\bf b) The conservative \what{} $\Ctwo$}.
Define the graph $G^{(2)}_{A,n}$ that has nodes $A^n$ and edges $(uabv, ubav)$ for all $a,b\in A$ and words $u,v$ with $|u|+|v|=n-2$.

\begin{lemma}
\label{lem:ConservativeConnectedComponents}
The connected components of $G^{(2)}_{A,n}$ are the weight classes.
\end{lemma}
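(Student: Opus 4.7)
The plan is to prove the two inclusions separately. For one direction, I would observe that every edge $(uabv, ubav)$ of $G^{(2)}_{A,n}$ is a swap of two adjacent letters in a word, which obviously does not change the multiset of symbols appearing in the word. Hence $w_n(uabv) = w_n(ubav)$, and by transitivity any two words in the same connected component have the same weight. So each connected component is contained in a single weight class.

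For the converse, I would fix a weight class and show it is contained in a single connected component. The key fact is that adjacent transpositions generate the symmetric group $S_n$ acting on positions. Given two words $x, y \in A^n$ with $w_n(x) = w_n(y)$, there exists a permutation $\alpha \in S_n$ of positions such that $y_i = x_{\alpha^{-1}(i)}$ for all $i$. Writing $\alpha$ as a product of adjacent transpositions $\tau_1 \tau_2 \cdots \tau_r$ (for instance, via the standard bubble-sort decomposition), I would define intermediate words $x = x^{(0)}, x^{(1)}, \ldots, x^{(r)} = y$ by letting $x^{(j)}$ be the result of applying $\tau_j$ to the positions of $x^{(j-1)}$. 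By construction, each consecutive pair $(x^{(j-1)}, x^{(j)})$ differs by the swap of two adjacent letters, hence is either equal (if the two letters are the same) or is an edge of $G^{(2)}_{A,n}$. Removing redundant repetitions yields a path from $x$ to $y$.

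The only mildly subtle point is to be careful that when two adjacent positions carry the same symbol $a$, the corresponding ``edge'' is in fact a self-loop $(uaav, uaav)$ and does not formally appear among the edges as stated; but this causes no issue since the endpoints coincide and we may simply drop such steps from the sequence. There is no real obstacle here — the lemma is essentially the standard fact that adjacent transpositions act transitively on the set of words of a given weight, packaged in graph-theoretic language.
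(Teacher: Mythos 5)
Your proof is correct. The paper in fact states this lemma without proof, treating it as immediate, and your argument is exactly the standard one it implicitly relies on: edges preserve weight, and adjacent transpositions (bubble sort) connect any two words of equal weight, with the self-loop case $a=b$ harmlessly discarded. It is also the direct analogue of the argument the paper does spell out for $G^{(4)}_{A,n}$, where an even position permutation is decomposed into $3$-cycles $(k \; k{+}1 \; k{+}2)$ of consecutive positions via Lemma~\ref{lem:ConCompAlt}; your careful handling of repeated symbols is precisely the point that makes the swap case easier than that one, since there no parity obstruction needs to be circumvented.
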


\begin{corollary}
\label{cor_consuniversal}
Let $P_2 = \{(ab \; ba) \;|\; a,b \in A\} \subseteq B_2(A)$. Then $P_2$ is $n$-control-universal for the conservative \what{} $\Ctwo$,
for all $n\geq 1$.
\end{corollary}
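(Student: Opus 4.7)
The plan is to identify $CP(n-2,P_2)$ (after rewirings) with a concrete set of transpositions in $\Sym(A^n)$, and then read off the generated subgroup using Lemma~\ref{lem:ConCompSym} and Lemma~\ref{lem:ConservativeConnectedComponents}. The case $n=1$ is vacuous since $Cons_1(A)=\{\id_1\}$, so assume $n\geq 2$.

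First I would unpack the definition. An element of $P_2$ is a 2-ary word swap $(ab\;ba)\in B_2(A)$. For $p=(ab\;ba)\in P_2$ and a control word $w\in A^{n-2}$, the function $\controlled{w}{p}$ swaps the two words $wab$ and $wba$ of $A^n$ and fixes all other words; a rewiring places the two ``action'' coordinates in arbitrary (not necessarily adjacent) positions and the $n-2$ control coordinates in the rest. Thus $CP(n-2,P_2)$ is exactly the set of transpositions $(u\;v)\in\Sym(A^n)$ such that $u,v\in A^n$ differ by exchanging the values at some pair of coordinates (and agree elsewhere). Each such transposition is conservative, so $\langle CP(n-2,P_2)\rangle\subseteq Cons_n(A)=\Ctwo\cap B_n(A)$.

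For the reverse inclusion, I would restrict attention to the transpositions $(uabv\;ubav)$ coming from adjacent action coordinates; these are exactly the swaps along the edges of the graph $G^{(2)}_{A,n}$. By Lemma~\ref{lem:ConservativeConnectedComponents} the connected components of $G^{(2)}_{A,n}$ are the weight classes of $A^n$, and Lemma~\ref{lem:ConCompSym} then gives
\[
SG(G^{(2)}_{A,n}) \;=\; \prod_{W} \Sym(W),
\]
the product being taken over the weight classes $W$ of $A^n$. Since every conservative permutation acts as an arbitrary permutation on each weight class and the weight classes are disjoint, this product is precisely $Cons_n(A)$.

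Putting the two inclusions together,
\[
Cons_n(A) \;=\; SG(G^{(2)}_{A,n}) \;\subseteq\; \langle CP(n-2,P_2)\rangle \;\subseteq\; Cons_n(A),
\]
so $\langle CP(n-2,P_2)\rangle = \Ctwo\cap B_n(A)$, as required. There is no real obstacle here: the only mildly delicate step is the bookkeeping that a rewiring of $\controlled{w}{(ab\;ba)}$ is indeed the claimed word transposition in $\Sym(A^n)$, after which the result is an immediate consequence of the two combinatorial lemmas of the previous section.
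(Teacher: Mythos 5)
Your proposal is correct and follows exactly the argument the paper intends (left implicit, modeled on the $P_1$ case): identify the rewired controlled swaps in $CP(n-2,P_2)$ with coordinate-exchange transpositions, sandwich $SG(G^{(2)}_{A,n}) \subseteq \gen{CP(n-2,P_2)} \subseteq Cons_n(A)$, and conclude via Lemma~\ref{lem:ConservativeConnectedComponents} and Lemma~\ref{lem:ConCompSym} that the swap group is the full product of symmetric groups on the weight classes, i.e.\ $Cons_n(A)$. Your handling of the degenerate case $n=1$ (where $Cons_1(A)=\{\id_1\}$) is also the right reading of the corollary's range.
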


The classical Fredkin gate that operates on $\{0,1\}^3$ is a $1$-controlled $P_2$-permutation.
However, note that in the case of a larger alphabet the controlled $P_2$-permutations only swap a specific pair of symbols,
not just the arbitrary contents of two cells.

We can extend this result to $Mod_k(A)$ by considering the graph as above with  added edges
 $(ua^k, ub^k)$ for all $a,b\in A$ and $u\in A^*$ with $|u|=n-k$. Then the
 set of permutations $P_2 \cup \{(a^k \; b^k) \;|\; a,b \in A\} \subseteq B_2(A) \cup B_k(A)$
 is $n$-control-universal for $Mod_k(A)$ 
 for large enough $n$.

\medskip

\noindent
{\bf c) The alternating \what{} $\Cthree$}.
Define the $3$-hypergraph $G^{(3)}_{A,n}$ that has nodes $A^n$ and hyperedges
$(uabv, uacv, udbv)$ where $a,b,c,d \in A$, $a \neq d$ and $b \neq c$, that is, all triples of words of
which two are at Hamming distance $2$ and others at distance~$1$, and the symbol
differences are in consecutive positions.

\begin{lemma}
If $n \geq 2$, then $G^{(3)}_{A,n}$ is connected. If $n = 1$, then $G^{(3)}_{A,n}$ is discrete.
\end{lemma}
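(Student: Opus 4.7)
The plan is to split on $n$ and reduce connectivity of the $3$-hypergraph to that of the Hamming graph on $A^n$. The case $n=1$ is immediate from the definition: a hyperedge requires two distinct, consecutive positions of variation, but words of length one offer no such pair of positions, so no hyperedges exist and $G^{(3)}_{A,1}$ is a discrete hypergraph on $|A|$ isolated vertices.

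For $n \geq 2$, the excerpt already tells us that the connected components of a $3$-hypergraph $H$ coincide with those of $Graph(H)$, so it suffices to show that $Graph(G^{(3)}_{A,n})$ is connected. I would prove this by showing that any two words $w, w' \in A^n$ at Hamming distance $1$ are already adjacent in $Graph(G^{(3)}_{A,n})$; connectivity of the whole vertex set then follows from the standard fact that $A^n$ is connected under single-coordinate edits (given arbitrary $x, y \in A^n$, change the coordinates one at a time from the values of $x$ to those of $y$). The case $|A| = 1$ is trivial because $A^n$ is a single vertex, so I may assume $|A| \geq 2$.

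To realise such $w, w'$ differing only at position $i$ as two vertices of a common hyperedge $(uabv, uacv, udbv)$, I need an adjacent index $j \in \{i-1, i+1\}$ lying in $\{1,\dots,n\}$; the assumption $n \geq 2$ always supplies one. Taking the two varying positions of the hyperedge to be $\{i, j\}$, letting $w[i] \neq w'[i]$ play the roles of $a$ and $d$, and choosing any $c \neq w[j]$ (possible since $|A| \geq 2$) produces a hyperedge in which $w$ and $w'$ appear as the first and third vertices respectively.

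The only subtlety worth flagging is bookkeeping: whether $j = i-1$ or $j = i+1$ swaps which of the two varying positions is associated with the pair $\{a,d\}$ and which with $\{b,c\}$ in the template. This is a routine case split rather than a real obstacle, so I expect the main content of the proof to be the short argument in the previous paragraph together with the Hamming-graph reduction.
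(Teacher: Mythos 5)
Your proof is correct, and since the paper states this lemma without proof, your argument supplies exactly the intended (and natural) one: reduce to connectivity of the single-coordinate-edit Hamming graph on $A^n$ by showing that any two words differing in one position $i$ lie in a common hyperedge built on the pair of consecutive positions $\{i,j\}$ with $j \in \{i-1,i+1\}$. You also correctly handle the degenerate points the bare statement glosses over --- the trivial $|A|=1$ case, the choice of the free symbol $c$ (resp.\ $d$) requiring $|A| \geq 2$, and the $n=1$ case where no pair of consecutive positions exists, so the hypergraph has no hyperedges at all.
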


\begin{corollary}
\label{cor:corollaryP3}
Let $P_3 = \{(ab \; ac \; db) \;|\; a,b,c,d \in A\} \subseteq B_2(A)$. Then $P_3$ is $n$-control-universal for the alternating \what{} $\Cthree$,
for all $n\geq 2$.
\end{corollary}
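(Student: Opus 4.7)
The plan is to reduce the statement directly to the cycling group machinery developed in the previous section, analogously to how Corollary~\ref{cor:P1} reduces to Lemma~\ref{lem:ConCompSym}.

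First, I would unpack what a typical element of $CP(n-2,P_3)$ looks like. Each $p \in P_3$ is a $3$-cycle of the form $(ab \; ac \; db)$ on $A^2$ with $a\ne d$ and $b\ne c$. A $k$-controlled $P_3$-permutation with control word $w \in A^{n-2}$ and permutation $p$ is then the $3$-cycle on $A^n$ that cycles $(uabv \; uacv \; udbv)$ where $uv = w$ and the two acted-upon positions are placed consecutively; allowing rewirings places the control bits in any $n-2$ coordinate positions. In particular, $\gen{CP(n-2,P_3)}$ contains precisely the $3$-cycle associated to every hyperedge of $G^{(3)}_{A,n}$, so $CG(G^{(3)}_{A,n}) \subseteq \gen{CP(n-2,P_3)}$.

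Next, I would invoke the two ingredients already supplied. By the lemma above the corollary, $G^{(3)}_{A,n}$ is connected for $n \geq 2$, so by Lemma~\ref{lem:ConCompAlt} we obtain $CG(G^{(3)}_{A,n}) = \Alt(V(G^{(3)}_{A,n})) = \Alt(A^n)$. Combined with the inclusion of the previous paragraph this gives $\Alt(A^n) \subseteq \gen{CP(n-2,P_3)}$.

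For the reverse inclusion, each generator in $CP(n-2,P_3)$ is a $3$-cycle on $A^n$ (it acts nontrivially only on three words, and cyclically on them), hence an even permutation; so $\gen{CP(n-2,P_3)} \subseteq \Alt(A^n) = \Cthree \cap B_n(A)$. Putting both inclusions together yields the stated equality, establishing $n$-control-universality of $P_3$ for $\Cthree$ for every $n\geq 2$.

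The main obstacle is essentially bookkeeping: making sure that the notion of "hyperedge of $G^{(3)}_{A,n}$" (two words at Hamming distance $2$, both at Hamming distance $1$ from the third, in consecutive positions) matches, after rewiring, exactly the orbits of $P_3$-permutations on the two non-control coordinates. This is immediate from the definitions, and no further combinatorial work is needed beyond Lemmas~\ref{lem:ConCompAlt} and the connectedness of $G^{(3)}_{A,n}$.
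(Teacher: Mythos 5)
Your proposal is correct and is essentially the paper's own argument: the corollary is stated there as an immediate consequence of the connectedness of $G^{(3)}_{A,n}$ for $n \geq 2$ together with Lemma~\ref{lem:ConCompAlt}, via precisely your identification of the hyperedge $3$-cycles with (suitably rewired) elements of $CP(n-2,P_3)$ and the observation that every such generator is an even permutation of $A^n$. The only cosmetic caveat is the word ``precisely'': rewirings also yield $3$-cycles whose two acted coordinates are \emph{not} consecutive, so $CP(n-2,P_3)$ properly contains the hyperedge cycles, but since these extra generators are still word $3$-cycles and hence even, both of your inclusions go through unchanged.
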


\medskip

\noindent
{\bf d) The alternating conservative \what{} $\Cfour$}.
Define the 3-hypergraph $G^{(4)}_{A,n}$ that has nodes $A^n$ and hyperedges $(uabcv, ubcav, ucabv)$ where $a,b,c$ are single symbols, that is, all (word) rotations that rotate three consecutive symbols.

\begin{lemma}
If $n > |A|$, then the connected components of $G^{(4)}_{A,n}$ are the weight classes.
\end{lemma}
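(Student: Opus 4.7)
The plan is to translate connectivity in $G^{(4)}_{A,n}$ into an orbit problem for the symmetric group $S_n$ acting on $A^n$ by permuting positions. Since every hyperedge of $G^{(4)}_{A,n}$ preserves the multiset of symbols, the inclusion ``components $\subseteq$ weight classes'' is immediate; the real content is the converse, that each weight class is a single connected component, and this is where the hypothesis $n > |A|$ enters via pigeonhole.

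Fix a weight class $W$ and a representative $w_0 \in W$, and let $S_n$ act on $A^n$ by $(\sigma \cdot w)_i = w_{\sigma^{-1}(i)}$. Then $W$ is the orbit of $w_0$, and the stabilizer is the subgroup $H := \prod_{a \in A} \Sym(\{i : (w_0)_i = a\}) \leq S_n$, giving a bijection $W \cong S_n / H$. A hyperedge $(uabcv,\, ubcav,\, ucabv)$ applied to $w_0$ acts on it, as an element of $S_n$, as the $3$-cycle on consecutive positions $(k\;\, k{+}1\;\, k{+}2)$ (or its inverse), where $k = |u|+1$. Since the adjacent $3$-cycles $(i\;\, i{+}1\;\, i{+}2)$ for $1 \leq i \leq n-2$ are well known to generate the alternating group $A_n$ on $\{1,\dots,n\}$, the connected component of $w_0$ in $G^{(4)}_{A,n}$ corresponds under $W \cong S_n/H$ exactly to the image of $A_n$, i.e.\ to $(A_n \cdot H)/H$.

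To conclude, since $A_n$ has index $2$ in $S_n$, the subgroup $A_n \cdot H$ equals all of $S_n$ if and only if $H$ contains some odd permutation, which holds if and only if some symbol occurs at least twice in $w_0$. The latter is guaranteed precisely by $n > |A|$ via pigeonhole. The only step requiring any real care is the classical generation of $A_n$ by adjacent $3$-cycles; the remainder is bookkeeping identifying each hyperedge with a position permutation, and I foresee no substantial obstacle.
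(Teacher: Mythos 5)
Your proposal is correct and follows essentially the same route as the paper's proof: both reduce connectivity within a weight class to the two facts that adjacent $3$-cycles $(k\; k{+}1\; k{+}2)$ generate the alternating group on positions, and that $n > |A|$ forces (by pigeonhole) a repeated symbol whose transposition is an odd element of the stabilizer, which repairs parity. The only cosmetic difference is that you phrase the parity step via orbit--stabilizer as $A_n H = S_n$ iff $H \not\subseteq A_n$, whereas the paper directly composes a connecting position permutation with the transposition of two equal positions to make it even, and it justifies the adjacent-$3$-cycle generation of $A_n$ by invoking its Lemma~\ref{lem:ConCompAlt} rather than quoting it as classical.
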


\begin{proof}

When $n > |A|$ and two words $x$ and $y$ are in the same weight
class then there is an even permutation $\alpha\in S_n$ such that
$y=\pi_\alpha(x)$. This is because $x$ contains some letter twice,
say in positions $i$ and $j$, so that
$\pi_{(i\; j)}(x)=x$ for the odd permutation ${(i\; j)}\in S_n$.
The even permutation $\alpha$ is a
composition of 3-cycles of the type $(k\ k+1\ k+2)$. (To see this,
apply Lemma~\ref{lem:ConCompAlt} on the 3-hypergraph with the vertex
set $\{1,\dots ,n\}$ and hyperedges $(k,k+1,k+2)$ for $1\leq k\leq
n-2$.) But then also $\pi_\alpha$ is a composition of wire swaps of
the type $\pi_{(k\; k+1\; k+2)}$. Clearly, for all $u\in A^n$, words $u$ and $\pi_{(k\; k+1\;
k+2)}(u)$ belong to the same hyperedge of $G^{(4)}_{A,n}$ so we
conclude that $x$ and $y=\pi_\alpha(x)$ are in the same connected
component.
\qed
\end{proof}

We note that if $n \leq |A|$, then there are weight classes where each symbol occurs at most once. These classes split into two connected components depending
on the parity of the ordering of the letters.

\begin{corollary}
\label{cor:Something}
Let $P_4 = \{(abc \; bca \; cab) \;|\; a,b,c \in A\} \subseteq B_3(A)$. Then $P_4$ is $n$-control-universal
for the alternating conservative \what{} $\Cfour$, for all $n>|A|$.
\end{corollary}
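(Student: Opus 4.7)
The plan is to interpret $\gen{CP(n-3,P_4)}$ as the cycling group of the hypergraph $G^{(4)}_{A,n}$, and then to apply Lemma~\ref{lem:ConCompAlt} together with the preceding lemma about connected components.

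First I would verify the correspondence between generators. A typical element $\controlled{w}{p}$ of $CP(n-3,P_4)$ has control word $w\in A^{n-3}$ and permutation $p=(abc\;bca\;cab)\in P_4$ acting on the remaining three wires; as a permutation of $A^n$, this is the $3$-cycle $(wabc\;wbca\;wcab)$ fixing everything else. After taking all rewirings $\pi_\alpha\circ\controlled{w}{p}\circ\pi_\alpha^{-1}$, what we obtain is precisely the set of $3$-cycles of the form $(x\;y\;z)\in\Sym(A^n)$ where $(x,y,z)$ is a permutation of a hyperedge of $G^{(4)}_{A,n}$ in which $a,b,c$ sit in three (not necessarily consecutive after rewiring) positions and the context is fixed. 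So $\gen{CP(n-3,P_4)}$ coincides exactly with the cycling group $CG(G^{(4)}_{A,n})$.

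Next I would assemble the conclusion. By the preceding lemma, for $n>|A|$ the connected components of $G^{(4)}_{A,n}$ are the weight classes $W_\lambda=\{x\in A^n\mid w_n(x)=\lambda\}$. Applying Lemma~\ref{lem:ConCompAlt} yields
\[
\gen{CP(n-3,P_4)} \;=\; CG(G^{(4)}_{A,n}) \;=\; \prod_{\lambda}\Alt(W_\lambda).
\]
The right-hand side is, by definition, the set of permutations of $A^n$ that preserve every weight class and act as an even permutation within each one, i.e.\ exactly $\Cfour\cap B_n(A)$. This gives the desired $n$-control-universality.

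I do not anticipate any serious obstacle: the structural work has already been done in Lemma~\ref{lem:ConCompAlt} and in the component computation for $G^{(4)}_{A,n}$. The only point that requires a bit of care is checking, when matching generators, that rewiring truly ranges over all placements of the three action wires inside the $n$ positions (so that one obtains every hyperedge of $G^{(4)}_{A,n}$ as a generator, not just those where the active positions are consecutive); this is immediate from the definition of $CP(k,P)$, which already includes all rewirings $\pi_\alpha\circ\controlled{w}{p}\circ\pi_\alpha^{-1}$.
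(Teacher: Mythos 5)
Your proposal is correct and is essentially the paper's own (implicit) argument: identify $\gen{CP(n-3,P_4)}$ with the cycling group of $G^{(4)}_{A,n}$, invoke the preceding lemma that for $n>|A|$ its connected components are exactly the weight classes, and apply Lemma~\ref{lem:ConCompAlt} to conclude $\gen{CP(n-3,P_4)}=\prod_\lambda \Alt(W_\lambda)=\Cfour\cap B_n(A)$. The one inaccuracy is in your matching of generators: by definition the hyperedges of $G^{(4)}_{A,n}$ rotate three \emph{consecutive} symbols, so rewired gates whose active wires are non-consecutive are generators \emph{beyond} the hyperedge cycles (not hyperedges themselves, as your last paragraph suggests); this is harmless, since each such gate is still a $3$-cycle confined to a single weight class and hence lies in $\prod_\lambda \Alt(W_\lambda)$, which restores the group equality $\gen{CP(n-3,P_4)}=CG(G^{(4)}_{A,n})$ that your argument needs.
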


\section{Finite generating sets of gates}

In order to apply the Induction Lemma we first observe that 2-controlled 3-word-cycles in any five element set can obtained from 1-controlled 3-word-cycles.

\begin{lemma}
\label{lem:fivelemma}
Let $X\subseteq A^n$ contain at least five elements, and let
\[ P=\{(x \; y \; z) \;|\; x,y,z \in X \}\subseteq B_n(A) \]
contain all $3$-word-cycles in $X$.
Then $CP(2,P) \subseteq \clone{CP(1,P)}$.
\end{lemma}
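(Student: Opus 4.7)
The plan is to show that each 2-controlled 3-cycle arises as a product of commutators of certain rewirings of 1-controlled 3-cycles, and that for $|X|\geq 5$ these commutators already cover every even permutation of the relevant ``slice'' of $A^{n+2}$.

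Since $\clone{CP(1,P)}$ is closed under rewiring at each arity, it suffices to show that for arbitrary $w_1,w_2\in A$ and $(x\;y\;z)\in P$, the 2-controlled 3-cycle $f_{w_1 w_2,(x\;y\;z)}$ with controls on wires $1,2$ and permutation acting on wires $3,\ldots,n+2$ lies in $\clone{CP(1,P)}$. For each 3-cycle $\sigma\in P$, let $g_1^\sigma$ be the rewiring of $f_{w_1,\sigma}\oplus\id_1$ whose control is wire $1$ with value $w_1$, whose identity wire is wire $2$, and whose $\sigma$-action is on wires $3,\ldots,n+2$; define $g_2^\tau$ similarly with control on wire $2$ (value $w_2$) and identity wire on wire $1$. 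Both $g_1^\sigma$ and $g_2^\tau$ lie in $\clone{CP(1,P)}$. A case analysis on $(u_1,u_2)\in A^2$ shows that $[g_1^\sigma,g_2^\tau]$ acts as the identity on every word $u_1 u_2 r$ with $u_1\neq w_1$ or $u_2\neq w_2$ (since then one factor acts trivially and its inverse cancels the other), while on a word $(w_1,w_2,r)$ the four composed actions yield $[\sigma,\tau](r)$ on the $n$-tail. Thus $[g_1^\sigma,g_2^\tau]=f_{w_1 w_2,\,[\sigma,\tau]}$, and any product of such commutators yields $f_{w_1 w_2,\,\pi}$ with $\pi=\prod_i[\sigma_i,\tau_i]$.

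Let $H\leq\Alt(X)$ be the subgroup generated by all commutators of pairs of 3-cycles on $X$. Conjugating $[\sigma,\tau]$ by any $g\in\Alt(X)$ yields $[g\sigma g^{-1},g\tau g^{-1}]$, again a commutator of 3-cycles, so $H$ is normal in $\Alt(X)$; and $H$ is nontrivial, since $[(x\;y\;z),(x\;y\;u)]=(x\;y)(z\;u)\neq e$ for any $u\in X\setminus\{x,y,z\}$, which exists because $|X|\geq 5$. The group $\Alt(X)$ is simple for $|X|\geq 5$, so $H=\Alt(X)$. Writing $(x\;y\;z)=\prod_i[\sigma_i,\tau_i]$ then gives $f_{w_1 w_2,(x\;y\;z)}=\prod_i[g_1^{\sigma_i},g_2^{\tau_i}]\in\clone{CP(1,P)}$.

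The main obstacle is the group-theoretic fact that commutators of 3-cycles span all of $\Alt(X)$, and this is exactly where the hypothesis $|X|\geq 5$ is essential: for $|X|=4$ the same construction only spans the Klein four-subgroup of $\Alt_4$, which contains no 3-cycles, and the argument collapses. The verification that $[g_1^\sigma,g_2^\tau]$ restricts cleanly to a single 2-controlled gate is routine case-checking on $(u_1,u_2)$.
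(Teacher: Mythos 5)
Your proof is correct, and it reaches the conclusion by a partly different route than the paper. Both arguments rest on the same cancellation mechanism: place one $1$-controlled gate on the first control wire and another on the second, so that on any input failing one of the controls the factors cancel in pairs (the controls are never modified, so this case analysis is sound), while on the doubly-controlled slice the tail undergoes a designated product. In fact the paper's construction is your commutator in disguise: it picks five distinct points $x,y,z,s,t\in X$ (this is where $|X|\geq 5$ enters there), sets $p_1=(s\;t)(x\;y)$ and $p_2=(s\;t)(y\;z)$, notes each $p_i$ is an involution expressible as a product of two $3$-cycles of $X$, and uses the explicit identity $(x\;y\;z)=p_1p_2p_1p_2$, which is exactly $[p_1,p_2]$ since the $p_i$ are involutions; thus $\controlled{ab}{(x\,y\,z)}=(\controlled{a*}{p_1}\circ\controlled{*b}{p_2})^2$ is an explicit product of eight $1$-controlled $3$-cycles. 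Where you genuinely differ is the group-theoretic input: instead of an explicit factorization, you let $H\leq\Alt(X)$ be generated by all commutators of $3$-cycles, verify normality and nontriviality (your computation $[(x\,y\,z),(x\,y\,u)]=(x\,y)(z\,u)$ checks out, and $[\sigma,\tau]^{-1}=[\tau,\sigma]$ keeps the generating set symmetric under inversion), and invoke simplicity of $\Alt(X)$ for $|X|\geq 5$. This buys conceptual clarity and correctly isolates why $|X|=4$ fails (commutators of $3$-cycles only reach the Klein four-group, which contains no $3$-cycle), but it is non-constructive: you obtain no explicit circuit and no bound on the number of $1$-controlled gates, whereas the paper's identity gives exactly eight, which it exploits later when hunting for small concrete generating sets. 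One small point worth making explicit in your write-up: the inverses $(g_1^{\sigma})^{-1}=g_1^{\sigma^{-1}}$ occurring in your commutators are again rewired extensions of elements of $CP(1,P)$, hence lie in $\clone{CP(1,P)}$, precisely because $P$ contains \emph{all} $3$-cycles of $X$ and is therefore inverse-closed.
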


\begin{proof}
Let $x,y,z\in X$ be pairwise different, and
pick $s,t\in X$ so that $x,y,z,s,t$ are five distinct elements of $X$.
Let $p_1=(s \; t)(x \; y)$ and $p_2=(s \; t)(y \; z)$. Then $p_1$ and $p_2$ consist of two disjoint word swaps, so they are
both involutions. Moreover, $(x \; y \; z) = p_1 p_2 p_1 p_2$. Further, we have that
$$
\begin{array}{rcl}
p_1&=&(s \; t \; x)(x \; s \; y), \mbox{ and}\\
p_2&=&(s \; t \; y)(y \; s \; z).
\end{array}
$$
Let $a,b\in A$ be arbitrary and consider the $2$-controlled $P$-permutation $f=\controlled{ab}{(x \; y \; z)} \in B_{2+n}(A)$ determined by the control word $ab$
and the 3-word-cycle $(x \; y \; z)$. Then $f=g \circ g$ where
$$
g = \controlled{a*}{p_1}\circ \controlled{*b}{p_2} = \controlled{a*}{(s \; t \; x)} \circ \controlled{a*}{(x \; s \; y)} \circ \controlled{*b}{(s \; t \; y)} \circ \controlled{*b}{(y \; s \; z)}
$$
is a composition of four 1-controlled $P$-permutations, where the star symbol indicates the control symbol not used by the gate.
See Figure~\ref{fig:threecycles} for an illustration.

\begin{figure}[ht]
\begin{center}
\includegraphics[width=11.5cm]{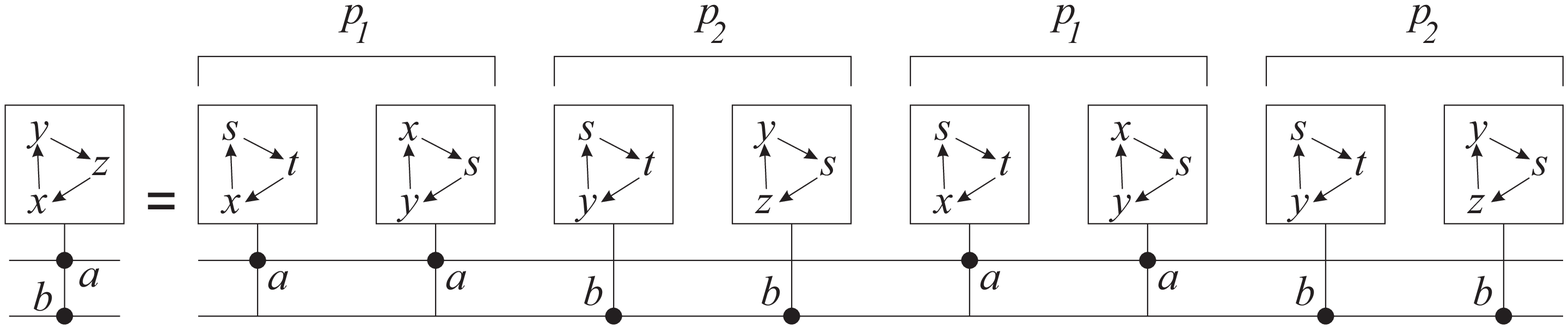}
\end{center}
\caption{A decomposition of the $ab$-controlled 3-word-cycle $(x \; y \; z)$ into a composition of eight 1-controlled 3-word-cycles.}
\label{fig:threecycles}
\end{figure}

To verify that indeed $f=g \circ g$, consider an input $w=a'b'u$ where $a',b'\in A$ and $u\in A^n$. If $a'\neq a$ then
$g(w)=\controlled{*b}{p_2}(w)$, so that $g\circ g(w)=w=f(w)$ since $p_2$ is an involution. Analogously, if $b'\neq b$ then $g\circ g(w)=w=f(w)$, because $p_1$ is an involution.
Suppose then that $a'=a$ and $b'=b$. We have $g\circ g(w)=ab( (p_1 p_2 p_1 p_2)(u)) = f(w)$. We conclude that $f\in \clone{CP(1,P)}$, and because
$f$ was an arbitrary element of $CP(2,P)$, up to reordering the input and output symbols, the claim $CP(2,P)\subseteq \clone{CP(1,P)}$ follows.
\qed
\end{proof}

\begin{corollary}
\label{cor:threecycles}
Let $X\subseteq A^n, P\subseteq B_n(A)$ be as in Lemma~\ref{lem:fivelemma}. Then
$\clone{CP(m,P)} \subseteq \clone{CP(1,P)}$
for all $m\geq 1$.

\end{corollary}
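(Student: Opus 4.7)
The plan is to derive the corollary as a direct consequence of the Induction Lemma (Lemma~\ref{lem:InductionLemma}) combined with Lemma~\ref{lem:fivelemma}. Since the Induction Lemma is already formulated to telescope a single ``add one control wire'' step into the full chain of inclusions, the work has essentially been done in the preceding lemma; all that remains is to verify the hypothesis and instantiate the conclusion appropriately.

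First I would observe that Lemma~\ref{lem:fivelemma} gives exactly $CP(2,P) \subseteq \clone{CP(1,P)}$, which is the statement ``$CP(k+1,P) \subseteq \clone{CP(k,P)}$'' of the Induction Lemma with $k = 1$. Thus the hypothesis of Lemma~\ref{lem:InductionLemma} is satisfied for $k=1$, and we may conclude $\clone{CP(m,P)} \subseteq \clone{CP(n,P)}$ for all $m \geq n \geq 1$. Specializing $n = 1$ immediately yields $\clone{CP(m,P)} \subseteq \clone{CP(1,P)}$ for every $m \geq 1$, which is the assertion of the corollary.

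There is no significant obstacle here, since both ingredients have already been established. The only thing worth a sentence of caveat is that the Induction Lemma is stated for an ambient $P \subseteq B_\ell(A)$, while in our application $P \subseteq B_n(A)$ (the arity plays the role of $\ell$); this is a purely notational matter and does not affect the argument. I would therefore keep the proof to two or three lines: cite Lemma~\ref{lem:fivelemma} to obtain the base case $CP(2,P) \subseteq \clone{CP(1,P)}$, invoke Lemma~\ref{lem:InductionLemma} with $k=1$, and read off the claim with $n=1$.
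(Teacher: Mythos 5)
Your proposal is correct and matches the paper's proof exactly: the paper's argument is simply ``Apply Lemma~\ref{lem:InductionLemma} with $k=1$,'' with the hypothesis $CP(2,P)\subseteq\clone{CP(1,P)}$ supplied by Lemma~\ref{lem:fivelemma}, precisely as you lay out. Your extra remark about the notational mismatch between $\ell$ and $n$ is harmless and accurate.
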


\begin{proof}
Apply Lemma~\ref{lem:InductionLemma} with $k=1$.
\qed
\end{proof}

\subsection{The alternating and full \what{}s}

Assuming that $|A| > 1$, the set $X=A^3$ contains at least five elements.
For $P=\{(x \; y \; z) \;|\; x,y,z \in A^3\}\subseteq B_3(A)$ we then have, by Corollary~\ref{cor:threecycles}, that
$\clone{CP(m,P)} \subseteq \clone{CP(1,P)}$ for all $m\geq 1$.

Recall that $P_3=\{(ab \; ac \; db) \;|\; a,b,c,d \in A\} \subseteq B_2(A)$ is $n$-control-universal for the alternating \what{} $\Cthree$, for $n\geq 2$ (Corollary~\ref{cor:corollaryP3}).
Clearly $\cp{1}{P_3}\subseteq P \subseteq \clone{\cp{0}{P}}$, so by Lemma~\ref{lem:ExtraWireLemma}, for any $m\geq 1$,
$$
\cp{m+1}{P_3} \subseteq \clone{\cp{m}{P}} \subseteq \clone{\cp{1}{P}}.
$$
Hence $\Cthree\cap B_{m+3}(A) = \gen{CP(m+1,P_3)} \subseteq \clone{\cp{1}{P}}$.
We conclude that $\clone{\cp{1}{P}}$ contains all permutations of $\Cthree$ except the ones in $B_1(A), B_2(A)$ and $B_3(A)$. We have proved the following theorem.

\begin{theorem}
\label{thm:AltFiniteGen}
The alternating \what{} $\Cthree$ is finitely generated. Even permutations of $A^4$ generate all even permutations of $A^n$ for all $n\geq 4$.
\end{theorem}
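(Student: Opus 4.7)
The plan is to assemble the machinery from the previous sections: the control-universality of $P_3$ for $\Cthree$ (Corollary~\ref{cor:corollaryP3}), the five-element reduction (Lemma~\ref{lem:fivelemma}) applied to $3$-word-cycles on $A^3$, and the Induction Lemma (Lemma~\ref{lem:InductionLemma}) to push everything down to arity $4$.

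First, I would set $X = A^3$ and $P = \{(x\;y\;z)\mid x,y,z\in A^3\}\subseteq B_3(A)$. As long as $|A|\geq 2$, the set $X$ has $|A|^3\geq 8\geq 5$ elements, so Lemma~\ref{lem:fivelemma} applies and yields $CP(2,P)\subseteq \clone{CP(1,P)}$. Invoking Corollary~\ref{cor:threecycles} (equivalently the Induction Lemma with $k=1$), this strengthens to $\clone{CP(m,P)}\subseteq \clone{CP(1,P)}$ for every $m\geq 1$.

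Next I would connect this to the control-universal family $P_3\subseteq B_2(A)$. Each element of $CP(1,P_3)\subseteq B_3(A)$ is a $3$-cycle on $A^3$, hence lies in $P$, and trivially $P\subseteq \clone{CP(0,P)}$. Applying Lemma~\ref{lem:ExtraWireLemma} with $Q=P_3$, $h=1$, $k=0$ (and the lemma's ``$P$'' role played by our $P$), I obtain $CP(m+1,P_3)\subseteq \clone{CP(m,P)}$ for every $m\geq 0$. Combining with the previous step gives $CP(m+1,P_3)\subseteq \clone{CP(1,P)}$ for all $m\geq 1$.

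Finally, since $P_3$ is $n$-control-universal for $\Cthree$ for every $n\geq 2$ by Corollary~\ref{cor:corollaryP3}, we have $\Cthree\cap B_{m+3}(A)=\gen{CP(m+1,P_3)}$, so
\[ \Cthree\cap B_{n}(A) \subseteq \clone{CP(1,P)} \qquad \text{for every } n\geq 4. \]
The finite set $CP(1,P)\subseteq B_4(A)$ consists of (at most) controlled $3$-cycles on $A^4$, which are even permutations, so it is a subset of $\Cthree\cap B_4(A)$ and witnesses that even permutations of $A^4$ generate all even permutations of $A^n$ for $n\geq 4$. I expect no real obstacle here: the only delicate point is the bookkeeping of arities and the verification that $CP(1,P_3)\subseteq P$ and that the rewirings/extensions needed in applying Lemma~\ref{lem:ExtraWireLemma} remain inside $\clone{CP(1,P)}$; once one sees that $P_3$ lifted with one control wire is literally a family of $3$-word-cycles on $A^3$, the rest is a mechanical chain of inclusions.
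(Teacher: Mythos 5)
Your proposal is correct and follows essentially the same route as the paper's own proof: taking $X=A^3$ with $P$ the $3$-word-cycles on $A^3$, applying Lemma~\ref{lem:fivelemma} and Corollary~\ref{cor:threecycles}, observing $CP(1,P_3)\subseteq P\subseteq\clone{CP(0,P)}$ to invoke Lemma~\ref{lem:ExtraWireLemma}, and finishing with the control-universality of $P_3$ from Corollary~\ref{cor:corollaryP3}. The only cosmetic difference is that you spell out the parity check that $CP(1,P)$ consists of even permutations of $A^4$, which the paper leaves implicit.
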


\begin{corollary}
\label{cor:FullFiniteGen}
Let $|A|$ be odd. Then the full \what{} $\Cone$ is finitely generated.
The permutations of $A^4$ generate all permutations of $A^n$ for all $n\geq 4$.
\end{corollary}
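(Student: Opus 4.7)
The plan is to reduce to the alternating case already handled by Theorem~\ref{thm:AltFiniteGen}, producing odd permutations for free from the parity arithmetic that becomes available when $|A|$ is odd. Concretely, I will take the finite generating set $F := \Sym(A^4)$ (which is finite, being the symmetric group on a finite set), and show that $\clone{F}$ contains $\Sym(A^n)$ for every $n\geq 4$. For finite generation of the full \what{} $\Cone$, I will then simply adjoin the remaining finite sets $\Sym(A^1)\cup\Sym(A^2)\cup\Sym(A^3)$, each of which is finite, to cover the small arities.

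The key observation, and the one place odd cardinality of $A$ is used, is the following parity count. Pick any transposition $\sigma=(a\;b)\in\Sym(A)$. Its extension $\hat\sigma_n\in B_n(A)$ on a single coordinate swaps precisely those pairs of words in $A^n$ that differ only in the distinguished coordinate, namely by $a\leftrightarrow b$. There are $|A|^{n-1}$ such disjoint transpositions, so $\hat\sigma_n$ is a product of $|A|^{n-1}$ disjoint $2$-cycles. Since $|A|$ is odd, $|A|^{n-1}$ is odd, hence $\hat\sigma_n$ is an odd permutation of $A^n$ for every $n\geq 1$. In particular $\hat\sigma_4\in F$, and for $n\geq 4$ we obtain $\hat\sigma_n\in\clone{F}$ by further extension on the extra coordinates.

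Now fix $n\geq 4$ and an arbitrary $f\in\Sym(A^n)$. If $f$ is even, then by the second assertion of Theorem~\ref{thm:AltFiniteGen} we already have $f\in\clone{\Alt(A^4)}\subseteq\clone{F}$. If $f$ is odd, then $f\circ\hat\sigma_n^{-1}$ is even, hence again in $\clone{F}$ by the same theorem, and therefore $f=(f\circ\hat\sigma_n^{-1})\circ\hat\sigma_n\in\clone{F}$. This yields $\Sym(A^n)\subseteq\clone{F}$ for all $n\geq 4$, which is the second statement of the corollary. Adjoining the finite sets $\Sym(A^i)$ for $i\in\{1,2,3\}$ gives a finite generating set for all of $\Cone$.

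There is no real obstacle here: the argument is essentially a parity correction, and everything nontrivial has been done in Theorem~\ref{thm:AltFiniteGen}. The only thing worth double-checking is that the extension of a unary transposition really is odd on $A^n$ when $|A|$ is odd, which is immediate from the disjoint-cycle decomposition above and contrasts sharply with the even-$|A|$ case in which such extensions are even and this trick is unavailable.
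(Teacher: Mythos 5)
Your proof is correct and follows essentially the same route as the paper: invoke Theorem~\ref{thm:AltFiniteGen} for the even permutations, then observe that the extension of one odd permutation of $A^4$ to $A^n$ decomposes into an odd number ($|A|^{k}$ with $|A|$ odd) of disjoint swaps, so adjoining it yields all of $\Sym(A^n)$. The only cosmetic differences are that the paper uses the word swap $(0000\;1000)$ (giving $|A|^{n-4}$ disjoint swaps) where you use an extended symbol transposition (giving $|A|^{n-1}$), and that you explicitly adjoin $\Sym(A^i)$ for $i\leq 3$ to cover small arities, a point the paper leaves implicit.
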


\begin{proof}
Let $|A|>1$ be odd.
Let $P$ be the set of all permutations of $A^4$, and let $n\geq 4$. By Theorem~\ref{thm:AltFiniteGen},
the closure $\clone{P}$ contains all even permutations of $A^n$. The set
$P$ also contains an odd permutation $f$, say the word swap $(0000\; 1000)$.
Consider
$\pi=f\oplus \id_{n-4}\in B_n(A)$  that applies the swap $f$ on the first
four input symbols and keeps the others unchanged.
This $\pi$ is an odd permutation because it consists of $|A|^{m-4}$ disjoint swaps and $|A|$ is odd.
Because $\clone{P}\cap B_n(A)$ contains all even permutations of $A^n$ and an odd one, it contains all permutations.
\qed
\end{proof}

Recall that if
a circuit implements the permutation $f\oplus\id_k\in B_{n+k}(A)$,
we say it implements $f \in B_n(A)$ using $k$
borrowed bits.

\begin{corollary}
\label{cor:borrowedcor}
The \what{} $\Cone$ is finitely generated using at most one borrowed bit.
\end{corollary}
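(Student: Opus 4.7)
The plan is to reduce to Theorem~\ref{thm:AltFiniteGen} via a short parity argument, using the single borrowed bit to turn every gate into an even permutation.

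The odd-$|A|$ case is already covered by Corollary~\ref{cor:FullFiniteGen}, which furnishes a finite generating set for $\Cone$ using zero borrowed bits. So the only real content is the even-$|A|$ case, and there I would rely on the following observation: for any $f\in B_n(A)$, the extension $f\oplus \id_1\in B_{n+1}(A)$ decomposes into $|A|$ disjoint copies of $f$, one for each possible value of the extra wire, so
\[
\mathrm{sgn}(f\oplus \id_1) \;=\; \mathrm{sgn}(f)^{|A|} \;=\; 1
\]
whenever $|A|$ is even. Thus $f\oplus \id_1\in \Cthree$ for every $f\in B(A)$, regardless of the parity of $f$.

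Next I would invoke Theorem~\ref{thm:AltFiniteGen} to fix a finite generating set $P_0$ of $\Cthree$: take the even permutations of $A^4$ supplied by the theorem, and throw in the (finitely many) alternating gates of arities at most $3$ to handle low-arity cases. Since $f\oplus \id_1 \in \Cthree = \clone{P_0}$ for every $f\in B(A)$, the set $P_0$ implements $f\oplus \id_1$, which by the definition preceding Corollary~\ref{cor:borrowedcor} is exactly what it means to compute $f$ from $P_0$ using one borrowed bit. Combining this with the odd-$|A|$ case yields the statement.

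There is essentially no obstacle: the parity computation is the entire mathematical content, and the only book-keeping is ensuring that $P_0$ generates the alternating \what{} in arities $n+1$ for all $n\geq 1$, including the small arities $n+1\le 3$, which is immediate because the relevant groups are finite.
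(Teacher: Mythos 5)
Your proposal is correct and follows essentially the same route as the paper: the odd case is delegated to Corollary~\ref{cor:FullFiniteGen}, and in the even case one observes that $f\oplus\id_1$ is always an even permutation (since it consists of $|A|$ disjoint copies of $f$), so Theorem~\ref{thm:AltFiniteGen} applies. Your explicit sign computation $\mathrm{sgn}(f\oplus\id_1)=\mathrm{sgn}(f)^{|A|}=1$ and the remark about adjoining the finitely many low-arity even gates merely spell out details the paper leaves implicit.
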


\begin{proof}
For $|A|$ odd the claim follows from Corollary~\ref{cor:FullFiniteGen}.
When $A$ is even then the permutations $f\oplus\id$ with one borrowed bit are all even, so the claim follows
from Theorem~\ref{thm:AltFiniteGen}.
\qed
\end{proof}

\subsection{The alternating conservative \what{}}

Assuming $|A|>1$, every non-trivial weight class of $A^5$ contains at least five elements.
(The trivial weight-classes are the singletons $\{a^5\}$ for $a\in A$.) For every non-trivial weight class $X$ we set
$P_X=\{(x \; y \; z) \;|\; x,y,z \in X\}\subseteq B_5(A)$ for the 3-word-cycles in $X$. By Corollary
\ref{cor:threecycles} we know that
$\clone{CP(m,P_X)} \subseteq \clone{CP(1,P_X)}$ for all $m\geq 1$. Let $P$ be the union of $P_X$ over all non-trivial weight classes $X$.
Then, because $\clone{\cdot}$ is a closure operator,  also $\clone{CP(m,P)} \subseteq \clone{CP(1,P)}$ for all $m\geq 1$.

By Corollary~\ref{cor:Something}, the set $P_4 = \{(abc \; bca \; cab) \;|\; a,b,c \in A\} \subseteq B_3(A)$
is $n$-control-universal for the alternating conservative
\what{} $\Cfour$, for all $n > |A|$.

Let $m\in\N$ be such that $m\geq 1$ and $m+5 > |A|$.
Because $\cp{2}{P_4}\subseteq P \subseteq \clone{\cp{0}{P}}$, by Lemma~\ref{lem:ExtraWireLemma} we have
$$
\cp{m+2}{P_4} \subseteq \clone{\cp{m}{P}} \subseteq \clone{\cp{1}{P}}.
$$
Hence $\Cfour\cap B_{m+5}(A) = \gen{CP(m+2,P_4)} \subseteq \clone{\cp{1}{P}}$.
We conclude that $\clone{\cp{1}{P}}$ contains all permutations of $\Cfour$
except possibly the ones in
$B_k(A)$ for $k\leq 5$ and for $k\leq |A|$.
This proves the following theorem.

\begin{theorem}
The alternating conservative \what{} $\Cfour$ is finitely generated. A gate set generates the whole $\Cfour$ if it generates,
for all $n\leq 6$ and all $n\leq |A|$,
the conservative permutations of $A^n$ that are even on all weight classes.
\qed
\end{theorem}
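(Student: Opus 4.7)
The plan is to mimic the proof of Theorem~\ref{thm:AltFiniteGen}, but carried out weight class by weight class rather than on all of $A^5$. The three ingredients are already at hand: Corollary~\ref{cor:Something} gives a control-universal family for $\Cfour$, Corollary~\ref{cor:threecycles} collapses multiply-controlled three-cycles to singly-controlled ones as long as the underlying set has at least five elements, and Lemma~\ref{lem:ExtraWireLemma} lets us bolt extra control wires onto a controlled implementation.

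First I would check the cardinality hypothesis. Assuming $|A|>1$, every weight class $X$ of $A^5$ other than the singletons $\{a^5\}$ contains at least five elements (since a non-constant weight vector of length $5$ admits many distinct arrangements). For each such non-trivial weight class $X$, let $P_X=\{(x \; y \; z)\mid x,y,z\in X\}\subseteq B_5(A)$ be the set of $3$-word-cycles inside $X$. Corollary~\ref{cor:threecycles} applied to $X$ gives $\clone{CP(m,P_X)}\subseteq \clone{CP(1,P_X)}$ for all $m\geq 1$. Setting $P=\bigcup_X P_X$, and using that $\clone{\cdot}$ is a closure operator, this union-over-finitely-many-classes yields $\clone{CP(m,P)}\subseteq \clone{CP(1,P)}$ for every $m\geq 1$.

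The next step is to import the control-universal set $P_4\subseteq B_3(A)$ from Corollary~\ref{cor:Something}. The observation I need is that every $2$-controlled $P_4$-permutation has arity $5$, fixes its two control inputs, and cyclically permutes three words $uabc,ubca,ucab$ of $A^3$ on the remaining wires; the three resulting $5$-letter words all share the same weight, so the whole gate is a $3$-word-cycle within some weight class of $A^5$. Hence $CP(2,P_4)\subseteq P \subseteq \clone{CP(0,P)}$. Applying Lemma~\ref{lem:ExtraWireLemma} with $h=2,k=0,Q=P_4$ then gives $CP(m+2,P_4)\subseteq \clone{CP(m,P)}\subseteq \clone{CP(1,P)}$ for all $m\geq 1$. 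Combining with the control-universality of $P_4$ (valid for arities strictly exceeding $|A|$) yields
\[\Cfour\cap B_{m+5}(A)=\gen{CP(m+2,P_4)}\subseteq \clone{CP(1,P)}\]
whenever $m\geq 1$ and $m+5>|A|$.

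This gives the first assertion: the finite set $CP(1,P)$ together with any generators for the finitely many ``leftover'' arities produces a finite generating set for $\Cfour$. For the second, more explicit assertion, I would read off which arities are handled: the argument covers every arity $n\geq \max(6,|A|+1)$, so the only arities not yet dealt with are those with $n\leq 5$ or $n\leq |A|$, which, incorporating the arity~$6$ boundary coming from $m=1$, is the range $n\leq 6$ together with $n\leq |A|$ stated in the theorem. The only real verification in the whole plan is the weight-preservation observation that $CP(2,P_4)\subseteq P$; everything else is a bookkeeping exercise in the arities, and I expect that to be the only place where a careful check is needed.
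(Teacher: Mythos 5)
Your proposal is correct and follows essentially the same route as the paper's own proof: the same sets $P_X$ of $3$-word-cycles over the non-trivial weight classes of $A^5$, the same key inclusion $CP(2,P_4) \subseteq P \subseteq \clone{CP(0,P)}$ fed into Lemma~\ref{lem:ExtraWireLemma}, and the same arity bookkeeping via Corollaries~\ref{cor:threecycles} and~\ref{cor:Something}. The one step you single out as needing verification---that each $2$-controlled $P_4$-permutation is a $3$-word-cycle within a single weight class of $A^5$---is precisely the observation the paper's proof relies on as well.
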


\section{Non-finitely generated \what{}s}

It is well known that the full \what{} is not finitely generated over even alphabets. The reason is that any permutation
$f\in B_n(A)$ can only compute even permutations on $A^{m}$ for $m>n$.
\begin{theorem}[\cite{toff80}]
For even $|A|$, the full \what{} $\Cone$ is not finitely generated.
\end{theorem}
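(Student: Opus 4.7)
The plan is to show that for any finite set of gates $P \subseteq B(A)$, letting $N = \max\{\ell : P \cap B_\ell(A) \neq \emptyset\}$, every element of $\clone{P} \cap B_m(A)$ for $m > N$ is an even permutation, and hence the odd permutations in $\Sym(A^m)$ (which exist for all $m$ with $|A^m|\geq 2$) witness that $\clone{P} \neq \Cone$. Since this works for every finite $P$, the full \what{} $\Cone$ cannot be finitely generated.

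The key ingredient is a parity count for extensions. For any $f \in B_\ell(A)$ and any $m \geq \ell$, the gate $f \oplus \id_{m-\ell}$ acts on $A^m$ by applying $f$ independently on each of the $|A|^{m-\ell}$ slices obtained by fixing the last $m-\ell$ coordinates. It is therefore a disjoint product of $|A|^{m-\ell}$ copies of $f$, so as a permutation of $A^m$ its sign is $(\operatorname{sgn} f)^{|A|^{m-\ell}}$. First I would note that whenever $|A|$ is even and $m > \ell$, the exponent $|A|^{m-\ell}$ is even, and hence $f \oplus \id_{m-\ell}$ is an even permutation of $A^m$ regardless of the parity of $f$. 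Any rewiring of such a gate is a conjugate in $\Sym(A^m)$, and conjugation preserves sign, so every extension of $f$ to $A^m$ is an even permutation.

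With this in hand, the argument by induction on the construction of $\clone{P}$ runs as follows. Fix $m > N$. The generators of $\clone{P} \cap B_m(A)$ are obtained either as extensions to $A^m$ of gates $f \in P \cup \{\id_1\}$ of arity $\ell \leq N < m$, or as compositions of equal-arity gates already built. By the previous paragraph, every extension of a gate of arity strictly less than $m$ to $A^m$ lies in $\Alt(A^m)$. Composition of equal-arity gates on $A^m$ is multiplication in $\Sym(A^m)$, which preserves the subgroup $\Alt(A^m)$. Hence by induction $\clone{P} \cap B_m(A) \subseteq \Alt(A^m) \subsetneq \Sym(A^m)$, yielding the desired gap.

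The only subtle point, and the one I would take care to verify carefully, is that the revital closure does not introduce any way to escape $\Alt(A^m)$ on arity $m$: there is no separate generator of odd parity available at arity $m > N$ because (i) the starting set $P \cup \{\id_1\}$ contains nothing of arity $m$, (ii) rewirings are already absorbed into the notion of extension and thus the wire permutations $\pi_\alpha$ are never used with an odd sign without having first been built as an extension of a lower-arity gate, and (iii) both composition and conjugation are sign-preserving on $\Sym(A^m)$. Once this is checked, the conclusion $\Cone \neq \clone{P}$ for every finite $P$ is immediate.
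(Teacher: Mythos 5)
Your proof is correct and takes essentially the same route as the paper, which justifies this theorem with exactly the parity observation you make: any $f \in B_n(A)$ applied on $A^m$ for $m > n$ acts as $|A|^{m-n}$ disjoint copies of $f$, hence has sign $(\operatorname{sgn} f)^{|A|^{m-n}} = +1$ when $|A|$ is even, while rewirings (being conjugations) and compositions preserve sign. You have merely spelled out the closure-induction details that the paper, citing Toffoli, leaves implicit in its one-line remark.
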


By another parity argument we can also show that the conservative \what{} $\Ctwo$ is not finitely generated on any
non-trivial alphabet, not even if infinitely many borrowed bits are available.
This generalizes a result in~\cite{xu15} on binary alphabets. Our proof is based on the same parity sequences as the one
in~\cite{xu15}, where these sequences are computed concretely for generalized Fredkin gates. However, our observation only relies on the (necessarily) low rank of a finitely-generated group of such parity sequences, and the particular conserved quantity is not as important.


Let $n\in\N$, and let $W$ be the family of the weight classes of $A^n$.
For any  $f\in \Cthree\cap B_n(A)$ and any weight class $c\in W$,
the restriction $f|_{c}$ of $f$ on the weight class $c$ is a permutation of $c$.
Let $\phi(f)_c\in \Z_2$ be its parity. Clearly, $\phi(f\circ g)_c=\phi(f)_c+\phi(g)_c$ modulo two,
so $\phi$ defines a group homomorphism from $\Cthree\cap B_n(A)$ to the additive abelian group $(\Z_2)^W$.
The image $\phi(f)$ that records all $\phi(f)_c$ for all $c\in W$ is the \emph{parity sequence} of $f$.
Because each element of the commutative group $(\Z_2)^W$ is an involution, it follows that the subgroup generated by
any $k$ elements has cardinality at most $2^k$.

Consider then a function $f\in \Cthree\cap B_{\ell}(A)$ for $\ell\leq n$.
Its application $f_n = f\oplus\id_{n-\ell} \in B_n(A)$
on length $n$ inputs is conservative, so it has the associated parity sequence $\phi(f')$, which we
denote by $\phi_n(f)$. Note that any conjugate $gfg^{-1}$ of $f$ by a wire permutation $g$ has the same parity sequence, so the parity sequence
does not depend on which input wires we apply $f$ on.

Let $f^{(1)}, f^{(2)}, \dots , f^{(m)}\in \Ctwo$  be a finite generator set, and
let us denote by $C\subseteq \Ctwo$ the \what{} they generate.
Let $n \geq 2$ be larger than the arity of any $f^{(i)}$. Then $C \cap B_n(A)$
is the group generated by the applications $f^{(1)}_n, f^{(2)}_n, \dots , f^{(m)}_n$
of the generators on length $n$ inputs, up to conjugation by wire permutations.
We conclude that there are at most $2^m$ different parity sequences on $C\cap B_n(A)$, for all sufficiently large $n$. We have proved the following lemma.


\begin{lemma}
\label{lem:paritysequencelemma}
Let $C$ be a finitely generated sub\what{} of $\Ctwo$. Then there exists a constant $N$ such that, for all $n$,
the elements of $C\cap B_n(A)$ have at most $N$ different parity sequences.
\end{lemma}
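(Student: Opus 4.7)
The plan is to fix a finite generating set $f^{(1)},\ldots,f^{(m)}$ for $C$ and to show that for every $n$ the parity sequences arising from elements of $C \cap B_n(A)$ all lie in a subgroup of $(\Z_2)^W$ generated by at most $m$ vectors, yielding the uniform bound $N = 2^m$.

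First I would verify that on $\Ctwo \cap B_n(A)$ the parity map $\phi$ is a group homomorphism into the elementary abelian $2$-group $(\Z_2)^W$. This is immediate weight-class by weight-class: a conservative map permutes each weight class $c$ setwise, so $(fg)|_c = f|_c \circ g|_c$ as permutations of the finite set $c$, and hence $\phi(fg)_c = \phi(f)_c + \phi(g)_c$ modulo $2$.

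The crucial observation is that $\phi$ is invariant under conjugation by a wire permutation $g$. Since wire permutations are themselves conservative and the codomain $(\Z_2)^W$ is abelian and $2$-torsion, $\phi(g f g^{-1}) = \phi(g) + \phi(f) + \phi(g) = \phi(f)$. Consequently every extension of a generator $f^{(i)}$ on $A^n$, being a rewiring of $f^{(i)} \oplus \id_{n - \ell_i}$, yields one and the same parity sequence, which I call $\phi_n(f^{(i)})$.

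Finally, unravelling the definition of $\clone{\cdot}$, every element of $C \cap B_n(A)$ is a composition of such extensions of generators of arity at most $n$. Pushing through the homomorphism $\phi$, its parity sequence is a $\Z_2$-linear combination of the at most $m$ vectors $\phi_n(f^{(i)})$, so the image is contained in a subgroup of $(\Z_2)^W$ of order at most $2^m$. Taking $N = 2^m$ therefore works uniformly in $n$ (for small $n$ the bound is even easier, since strictly fewer generators have arity $\leq n$). I do not foresee a real obstacle here: the only subtle point is the wire-permutation invariance of parity sequences, and it drops out for free from the abelian $2$-torsion structure of the target.
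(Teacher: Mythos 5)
Your proposal is correct and follows essentially the same route as the paper: view the parity map $\phi$ as a homomorphism into the elementary abelian $2$-group $(\Z_2)^W$, note that parity sequences are invariant under rewiring, and bound the image of $C\cap B_n(A)$ by the subgroup generated by the $m$ vectors $\phi_n(f^{(i)})$, giving $N=2^m$. Your explicit justification of the rewiring invariance (wire permutations are conservative, so $\phi(gfg^{-1})=\phi(g)+\phi(f)+\phi(g)=\phi(f)$ by $2$-torsion) is a clean way to fill in a step the paper merely asserts, but it does not change the argument.
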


Now we can prove the following negative result. Not only does it state
that no finite gate set generates the conservative \what{}, but even that there
necessarily remain conservative permutations that cannot be obtained
using any number of borrowed bits.

\begin{theorem}
\label{thm:XuGeneralization}
Let $|A|>1$. The conservative \what{} $Cons(A)$ is not finitely generated. In fact, if
$C\subseteq Cons(A)$ is finitely generated then there exists $f\in
Cons(A)$ such that $f\oplus \id_k\not\in C$ for all $k=0,1,2,\dots$.
\end{theorem}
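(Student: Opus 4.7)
The plan is to sharpen Lemma~\ref{lem:paritysequencelemma} into an algebraic constraint by giving a multinomial-algebra description of the ``add borrowed bits'' map and exploiting Frobenius in characteristic two. Fix generators $C=\gen{g_1,\dots,g_r}$ with $g_i\in B_{n_i}(A)$. Parity sequences are invariant under wire-permutation conjugation (conjugating by $\pi_\alpha$ acts as a permutation within each weight class, preserving parity), so $V_n:=\phi(C\cap B_n(A))\leq(\Z_2)^{W_n}$ is the subgroup generated by the $r$ vectors $\phi(g_i\oplus\id_{n-n_i})$, and $|V_n|\leq 2^r$. I then identify $(\Z_2)^{W_l}$ with the degree-$l$ homogeneous part of the polynomial ring $\Z_2[a:a\in A]$ via $e_c\mapsto\prod_a a^{c_a}$. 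A short count using the multinomial theorem shows that for a conservative $f\in B_l(A)$ with $\phi(f)=x$, the parity sequence of $f\oplus\id_k$ equals $T_k(x):=x\cdot\bigl(\sum_{a\in A}a\bigr)^k$. Thus $T_k\colon(\Z_2)^{W_l}\to(\Z_2)^{W_{l+k}}$ is linear and satisfies $T_a\circ T_b=T_{a+b}$.

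The key technical step is that $T_{2^s}$ is injective whenever $2^s>l$. By Frobenius in characteristic two, $\bigl(\sum_a a\bigr)^{2^s}=\sum_a a^{2^s}$, so $T_{2^s}(e_c)=\sum_{a\in A}e_{c+2^s e_a}$. An equality $c+2^s e_a=c'+2^s e_b$ with $a\ne b$ would force some coordinate of $c'$ to reach $2^s$, violating $|c'|=l<2^s$; with $a=b$ it forces $c=c'$. So for distinct $c,c'\in W_l$ the supports of $T_{2^s}(e_c)$ and $T_{2^s}(e_{c'})$ are disjoint, giving injectivity.

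Now choose $m\ge\max_i n_i$ large enough that $|W_m'|>r$, where $W_m'\subseteq W_m$ is the set of non-singleton weight classes; this is possible for $|A|\ge 2$ because $|W_m'|=|W_m|-|A|\to\infty$. Set $p_i:=T_{m-n_i}(\phi(g_i))\in(\Z_2)^{W_m}$ and $P:=\gen{p_1,\dots,p_r}$, so $|P|\leq 2^r$. The semigroup property gives $V_{m+2^s}=T_{2^s}(P)$ for every $s$; when $2^s>m$, injectivity of $T_{2^s}$ upgrades this to $T_{2^s}^{-1}(V_{m+2^s})=P$. Since transpositions within a single non-singleton weight class realize the parity sequence $e_c$, the homomorphism $\phi$ surjects $Cons_m(A)$ onto $(\Z_2)^{W_m'}$, and because $|P|\leq 2^r<2^{|W_m'|}$ I can pick an $f\in Cons_m(A)$ with $\phi(f)\in(\Z_2)^{W_m'}\setminus P$.

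For every $s$ with $2^s>m$ this forces $\phi(f\oplus\id_{2^s})=T_{2^s}(\phi(f))\notin V_{m+2^s}$, so $f\oplus\id_{2^s}\notin C$. If $f\oplus\id_{k_0}\in C$ for some $k_0$, closure of $C$ under extensions would give $f\oplus\id_k\in C$ for all $k\ge k_0$, contradicting the previous sentence for any $s$ with $2^s>\max(m,k_0)$. Hence $f\oplus\id_k\notin C$ for every $k$, which proves the theorem, and in particular the non-finite-generation of $Cons(A)$. The main obstacle is the injectivity step; once the parity transport is reformulated as multiplication by $\bigl(\sum_a a\bigr)^k$ in the multinomial algebra, Frobenius in characteristic two makes the rest essentially weight bookkeeping.
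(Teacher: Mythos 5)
Your proposal is correct, and while it shares its scaffolding with the paper's proof --- the parity-sequence homomorphism $\phi$ into $(\Z_2)^W$, the rank bound $|V_n|\leq 2^r$ for a finitely generated sub-revital (the paper's Lemma~\ref{lem:paritysequencelemma}), and the closing monotonicity observation that $f\oplus\id_{k}\in C$ forces $f\oplus\id_{\ell}\in C$ for all $\ell> k$ --- the mechanism by which you defeat borrowed bits is genuinely different. The paper constructs an explicit family $f_1,\dots,f_{N+1}$, where $f_i$ swaps two words with $i$ ones and $n-i$ zeros, and exploits a triangular structure: $f_i\oplus\id_k$ is odd on the class with $i$ ones and even on every class with $j<i$ ones, so the $N+1$ parity sequences remain pairwise distinct for every $k$; a pigeonhole argument over infinitely many $k$ then isolates a single $f_i$ that fails for all $k$. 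You instead determine exactly how parity sequences transform under borrowing: your formula $\phi(f\oplus\id_k)=T_k(\phi(f))$ with $T_k$ given by multiplication by $\bigl(\sum_{a\in A}a\bigr)^k$ checks out, since the parity of $f\oplus\id_k$ on a class $d$ is $\sum_e \binom{k}{e}\phi(f)_{d-e}$ with $\binom{k}{e}$ the number of length-$k$ words of weight $e$, taken mod $2$; Frobenius then gives $T_{2^s}(e_c)=\sum_a e_{c+2^s e_a}$, and your disjoint-support argument for injectivity when $2^s>m$ is sound (the images of distinct basis vectors cannot collide without some coordinate reaching $2^s>m$). This injectivity pins down the achievable arity-$m$ parity sequences through $2^s$ borrowed bits as \emph{exactly} the subgroup $P=\gen{p_1,\dots,p_r}$ of size at most $2^r$, which is a sharper structural statement than the paper proves and removes the need for pigeonholing over $k$: your single $f$ with $\phi(f)\in(\Z_2)^{W'_m}\setminus P$ (such $f$ exists since conservative maps have trivial parity on singleton classes and transpositions realize each $e_c$, so $\phi$ maps $Cons_m(A)$ onto $(\Z_2)^{W'_m}$, whose size $2^{|W'_m|}$ exceeds $|P|$) is excluded simultaneously at every $k=2^s>m$, and monotonicity finishes the job for all $k$. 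What the paper's route buys in exchange is elementarity and robustness: it uses no polynomial algebra and no special role for powers of two, relying only on the low rank of the parity group together with a triangular family of swaps --- consistent with the paper's remark that the particular conserved quantity is not important --- whereas your characteristic-two Frobenius trick is specifically tailored to parities of weight-preserving maps.
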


\begin{proof}
Let $0,1 \in A$ be distinct.
Let $C$ be a finitely generated sub\what{} of $\Ctwo$, and
let $N$ be the constant from
Lemma~\ref{lem:paritysequencelemma} for $C$. Let us fix $n\geq
N+2$. For each $i=1,2,\dots, N+1$,
consider the non-trivial weight classes $c_i$ containing the words of $A^n$ with $i$ letters 1 and $n-i$ letters 0.
For each $i$, let $f_i$ be the the permutation $f_i\in \Ctwo\cap B_n(A)$ that swaps two elements of $c_i$,
keeping all other elements of $A^n$ unchanged. This $f_i$ is odd on $c_i$ and even on all other weight classes, so all $f_i$ have different parity sequences.
We conclude that
some $f_i$ is not in $C$.

For the second, stronger claim, we continue by considering
an arbitrary $k\in\N$. For $i=1,2,\dots , N+1$, let $c^{(k)}_i$ be the parity class
of $A^{n+k}$ containing the words with $i$ letters 1 and $n+k-i$ letters 0.
Note that  $f^{(k)}_i=f_i\oplus \id_k$ is odd on $c^{(k)}_i$ and even on all $c^{(k)}_j$ with $j<i$. This means that
the parity sequences of $f^{(k)}_1, f^{(k)}_2, \dots , f^{(k)}_{N+1}$ are all different, hence some $f^{(k)}_i$ is not in $C$.
But then, for some $i\in\{1,2,\dots, N+1\}$, there are infinitely many
$k\in\N$ with the property that $f^{(k)}_i=f_i\oplus \id_k$ is not in
$C$. This means that $f_i\oplus \id_k\not\in C$ for \emph{any} $k\in \N$
as $f_i\oplus \id_k\in C$ implies that $f_i\oplus \id_\ell\in C$ for
all $\ell>k$. The permutation $f=f_i$ has the claimed property.
\qed
\end{proof}

The theorem generalizes directly to \what{}s defined by a certain type of conserved quantities, at least when borrowed bits are not used.

\begin{definition}
Let $|A| > 1$ and let $\sim$ be a sequence of equivalence relations, so that for all $n$, $\sim_n$ is an equivalence relation on $A^n$. If
\[ u \sim_n v \implies ua \sim_{n+1} va \]
then we say $\sim$ is \emph{compatible}, and if
\[ u \sim_n v \implies \pi(u) \sim_n \pi(v) \]
for all wire permutations $\pi$, then we say $\sim$ is \emph{permutable}. We say $\sim$ is a \emph{generalized conserved quantity} if it is both compatible and permutable. If for all $m \in \N$, there exists $n$ such that $\sim_n$ has at least $m$ equivalence classes with more than one word, we say $\sim$ is \emph{infinite-dimensional}.
\end{definition}

Say that $f \in B_n(A)$ is $\sim$-preserving if $f(u) \sim_{|u|} u$ for all $u \in \bigcup_n A^n$, and write $C_{\sim}$ for the set of all $\sim$-preserving permutations. 

\begin{theorem}
\label{thm:ConservedQuantityNonFG}
If $\sim$ is a generalized conserved quantity, then $C_{\sim}$ is a \what{}. If $\sim$ is infinite-dimensional, then $C_{\sim}$ is not finitely generated.
\end{theorem}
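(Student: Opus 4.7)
The plan is to verify the two claims separately: closure checks for the first, and a parity-sequence argument in the spirit of Theorem~\ref{thm:XuGeneralization} for the second. For the first, I would directly check that $C_{\sim}$ contains $\id_1$ and is closed under same-arity composition, parallel extension $f \mapsto f \oplus \id_k$, and wire-permutation rewiring $f \mapsto \pi f \pi^{-1}$. Composition is immediate from transitivity of $\sim$. For extension, given $f \in C_{\sim} \cap B_{\ell}(A)$ and $u = vw \in A^n$ with $v \in A^{\ell}$ and $w \in A^{n-\ell}$, iterating compatibility $n - \ell$ times turns $f(v) \sim_{\ell} v$ into $f(v)w \sim_n vw$. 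For rewiring, permutability immediately gives $(\pi f \pi^{-1})(u) = \pi(f(\pi^{-1}(u))) \sim_n \pi(\pi^{-1}(u)) = u$.

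For the second claim, I would generalize the parity-sequence argument. Any $f \in C_{\sim} \cap B_n(A)$ sends every $\sim_n$-class $c$ into itself by definition of $C_{\sim}$, hence bijectively onto itself by finiteness and injectivity, yielding a permutation $f|_c$ whose sign is recorded in $\phi_n(f) \in (\Z_2)^{W_n}$, where $W_n$ is the set of $\sim_n$-classes. The map $\phi_n$ is a group homomorphism. Permutability makes it equivariant, $\phi_n(\pi f \pi^{-1})_c = \phi_n(f)_{\pi^{-1}(c)}$, because $\pi$ restricts to a bijection $\pi^{-1}(c) \to c$ conjugating $f|_{\pi^{-1}(c)}$ with $(\pi f \pi^{-1})|_c$. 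Composing $\phi_n$ with the projection that sums parities over each $S_n$-orbit of classes then yields a conjugation-invariant homomorphism $\bar\phi_n : C_{\sim} \cap B_n(A) \to (\Z_2)^{W_n / S_n}$. If $C \subseteq C_{\sim}$ is a finitely generated sub-\what{} with $m$ generators, then for every $n$ exceeding their arities the image $\bar\phi_n(C \cap B_n(A))$ is generated by the $m$ images of those generators and has size at most $2^m$ uniformly in $n$. A single word swap within any nontrivial class $c$ realises the orbit indicator $e_{O(c)} \in (\Z_2)^{W_n / S_n}$, so the full image $\bar\phi_n(C_{\sim} \cap B_n(A))$ equals $(\Z_2)^{W_n^{>1}/S_n}$, where $W_n^{>1}$ denotes the set of nontrivial $\sim_n$-classes. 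Consequently $C \subsetneq C_{\sim}$ as soon as the number of $S_n$-orbits of nontrivial $\sim_n$-classes exceeds $m$.

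The main obstacle is deducing from infinite-dimensionality that this orbit count is unbounded in $n$: infinite-dimensionality only gives $|W_n^{>1}| \to \infty$, and since each orbit has size at most $n!$, one only obtains at least $|W_n^{>1}|/n!$ orbits a priori. I would attack this combinatorially by exploiting that a permutable equivalence is determined by the joint types on $A \times A$ it declares equivalent; this controls the $S_n$-orbit structure of $\sim$-classes uniformly in $n$, and combined with compatibility (which lifts classes from $A^n$ into $A^{n+1}$) should force the orbit count to track the class count. Alternatively, one may bypass the $S_n$-quotient by bounding $\dim_{\Z_2} \phi_n(C \cap B_n(A))$ directly: permutations of the identity wires of $g_i \oplus \id_{n-\ell_i}$ stabilise its parity sequence, giving a polynomial bound $m n^{\ell}$ with $\ell = \max_i \ell_i$, and comparison with $\dim \phi_n(C_{\sim} \cap B_n(A)) = |W_n^{>1}|$ yields the conclusion once the latter grows super-polynomially in $n$.
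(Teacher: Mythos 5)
Your verification of the first claim is complete and correct, and it is exactly the routine closure check the paper intends (composition via transitivity, extension via iterated compatibility, rewiring via permutability). The genuine gap is in the second claim, and it sits precisely where you flag it: nothing in your proposal derives, from infinite-dimensionality alone, that the number of $S_n$-orbits of nontrivial $\sim_n$-classes is unbounded over the levels $n$ above the generators' arities, and without that lower bound your (correctly constructed, conjugation-invariant) homomorphism $\bar\phi_n$ produces no contradiction. Your second repair route is moreover a dead end, not just incomplete: already for the conservative \what{} $\Ctwo$ the number of nontrivial weight classes of $A^n$ is polynomial in $n$ (of order $n^{|A|-1}$, so $n-1$ for $|A|=2$), so a dimension bound of the form $mn^{\ell}$ paired with a demand of super-polynomial growth of $|W_n^{>1}|$ would fail to recover even Theorem~\ref{thm:XuGeneralization}; and infinite-dimensionality as defined never implies super-polynomial growth. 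The reason Lemma~\ref{lem:paritysequencelemma} achieves the uniform bound $2^m$ rather than $2^{mn^{\ell}}$ is that in the conservative case wire permutations fix every weight class setwise, so the parity sequence of $f\oplus\id_{n-\ell}$ is genuinely invariant under conjugation, not merely equivariant.

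For comparison: the paper gives no written proof of this theorem; the sentence preceding the definition indicates the intended proof is a verbatim transposition of Lemma~\ref{lem:paritysequencelemma} and the first half of Theorem~\ref{thm:XuGeneralization}, with weight classes replaced by $\sim_n$-classes (bound the parity sequences of $C\cap B_n(A)$ by $2^m$, then take $N+1$ word swaps inside $N+1$ distinct nontrivial classes at a single sufficiently large level; infinite-dimensionality supplies such a level above all generator arities, since each fixed level has only finitely many classes). You are right that this transposition silently uses that wire permutations fix each nontrivial class setwise, and permutability alone does not give this: take $\sim_n$ to be equality for $n\leq 3$; at $n=4$ over $A=\{0,1\}$ relate each weight-$2$ word to its complement, giving three two-element classes $\{1100,0011\}$, $\{1010,0101\}$, $\{1001,0110\}$; and take weight classes for $n\geq 5$. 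This sequence is compatible and permutable, yet the wire swap $\pi_{(2\;3)}$ maps the nontrivial class $\{1100,0011\}$ onto the distinct class $\{1010,0101\}$. So a correct proof must either show that such anomalies die out for large $n$ (nontrivial classes eventually becoming unions of weight classes, hence fixed by wire permutations) or complete your orbit count directly; neither is carried out, so as written the second claim remains unproven — an honest and well-diagnosed gap, but a gap nonetheless, and one the paper's own one-line appeal to ``direct generalization'' also glosses over.
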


The theorem shows, for example, that the \what{} of functions in $B(\{0,1,2\})$ that preserve the number of zeroes, and preserve the number of ones modulo $k$, is not finitely generated.

\section{Concrete generating families}
\label{secsearches}

We have found finite generating sets for \what{}s in both the general and the conservative case. Our generating sets are of the form `all controlled $3$-word cycles that are in the family', and the reader may wonder whether there are more natural gate families that generate these classes. Of course, by our results, there is an algorithm for checking whether a particular set of gates is a set of generators, and in this section we give some examples.

First, we observe that $CP(2,P_1)$ (that is, $2$-controlled symbol swaps) generate all permutations of $A^3$ and all even permutations of $A^n$ for all $n\geq 4$. Indeed, by Corollary~\ref{cor:P1} they generate $B_3(A)$, and by Figure~\ref{fig:tof} they generate $CP(2,P_3)$ (the 2-controlled $3$-cycles of length-two words). These in turn, by Corollary~\ref{cor:corollaryP3}, generate all even permutations of $A^4$
which is enough by Theorem~\ref{thm:AltFiniteGen} to get all even permutations on $A^n$ for $n\geq 4$.

It is easy to see that $CP(2,P_1)$ in turn is generated by all symbol swaps and the $w$-word-controlled symbol swaps for a single $w \in A^2$. In particular in the case of binary alphabets, we obtain that the alternating \what{} is generated by the Toffoli gate and the negation gate, which was also proved in \cite{xu15}.

In the conservative binary case, the Fredkin gate is known to be universal (in the sense of auxiliary bits, see \cite{xu15}). The Fredkin gate is, due to the binary alphabet, both the unique 1-word-controlled wire swap and the unique nontrivial conservative 1-word-controlled word swap. The natural generalizations would be to show that in general the $1$-controlled wire swaps or conservative word swaps generate the alternating conservative \what{}. We do not prove this, but do show how the universality of the Fredkin gate follows from our results and a bit of computer search.

\begin{figure}[ht]
\begin{center}
\includegraphics[width=9cm]{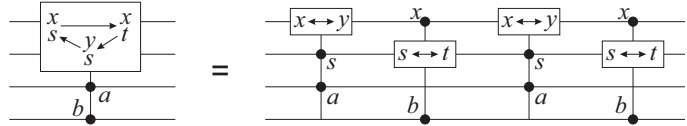}
\end{center}
\caption{A decomposition of the $ab$-controlled 3-cycle $(xs \; xt \; ys)$ into a composition of four 2-controlled swaps.}
\label{fig:tof}
\end{figure}

The following shows that the 00-word-controlled rotation is generated by the 0-word-controlled rotation.

\begin{lemma}
The $00$-word-controlled three-wire rotation can be implemented with nine $0$-word-controlled three-wire rotations but can not be implemented with eight. The $01$-word-controlled three-wire rotation can be implemented with eight $0$-word-controlled three-wire rotations but can not be implemented with seven.
\end{lemma}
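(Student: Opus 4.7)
The plan is to establish the four sub-claims separately: the two upper bounds by explicit circuit constructions, and the two matching lower bounds by an exhaustive breadth-first search in the relevant permutation group. Let $R$ denote the $0$-word-controlled three-wire rotation, viewed along with all of its extensions and rewirings as elements of $\mathrm{Sym}(\{0,1\}^5)$, and let $R_{00}, R_{01} \in \mathrm{Sym}(\{0,1\}^5)$ denote the two target permutations. Since each gate acts on at most four of the five wires and may use any of the wires as control or as any of the three rotated wires in either direction, the generator set $S$ contains at most $5 \cdot \binom{4}{3} \cdot 2 = 40$ distinct rewirings of $R$ on five wires.

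For the upper bounds I would exhibit explicit decompositions of length nine for $R_{00}$ and length eight for $R_{01}$, presented as labelled circuit diagrams in the style of Figures~\ref{fig:threecycles} and~\ref{fig:tof}. The natural template is the standard trick for simulating a doubly-controlled gate by singly-controlled ones: use one auxiliary wire to carry the AND (or NAND) of the two control bits, fire the simpler controlled gate from that auxiliary wire, and then uncompute. Because the action of $R$ is a $3$-cycle on words rather than an involution, uncomputing requires the \emph{inverse} rotation rather than a second copy of the same gate, so the clean four-gate decomposition of Figure~\ref{fig:tof} does not transplant directly; finding an efficient circuit is therefore left to a short computer search, but once found it can be verified by case analysis on the five input bits.

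For the lower bounds I would perform a breadth-first search in the Cayley graph of the subgroup $G \leq \mathrm{Sym}(\{0,1\}^5) \cong S_{32}$ generated by $S$. Starting from the identity, maintain the set $B_k$ of permutations expressible as a product of at most $k$ generators, expanding $B_{k+1} = B_k \cdot S$ with hash-based deduplication. Representing each permutation as an array of $32$ bytes makes this very cheap in memory. The claims $R_{01} \notin B_7$ and $R_{00} \notin B_8$ are then simply read off by membership tests at each level, and together with the upper-bound circuits (which witness $R_{01} \in B_8$ and $R_{00} \in B_9$) give the four exact bounds. The main obstacle is purely computational: one must trust (or re-run) the BFS, as I do not see a clean invariant — such as a parity, rank, or orbit count — that would separate $R_{00}$ and $R_{01}$ from all short $S$-products by a human-readable argument, and the paper already signals that the verification is expected to use \emph{a bit of computer time}.
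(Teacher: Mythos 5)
Your proposal matches the paper's proof in essence: the paper likewise establishes the lower bounds by asserting that a computer search shows eight (resp.\ seven) gates do not suffice, and the upper bounds by explicit nine- and eight-gate circuits over exactly the search space you describe (all rewirings of the $0$-controlled rotation on five wires, compositions at arity five with no extra wires). The only difference is that the paper actually exhibits the witnessing decompositions (in a formula and in Figure~\ref{fig:ccrot}) rather than deferring them to the search, and your unidirectional BFS to depth eight would in practice want the bidirectional refinement you hint at, but neither point is a mathematical gap.
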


\begin{proof}
A computer search shows that eight and seven gates do not suffice. We show how to compose the $00$-word-controlled rotation out of nine $0$-word-controlled rotations.

Let $A = \{0,1\}$ and $R \in B_3(A)$ be the rotation $R = \pi_{(1\,2\,3)}$. Write $\rho_{a,b,c,d}(f)$ for $f$ applied to cells $a,b,c,d$ in that order.
\begin{align*}
\controlled{00}{R} = \;
&\rho_{1,0,2,3}(\controlled{0}{R}) \circ
\rho_{3,1,4,2}(\controlled{0}{R}) \circ
\rho_{1,0,2,4}(\controlled{0}{R}) \circ \\
&\rho_{3,0,1,2}(\controlled{0}{R}) \circ
\rho_{0,1,3,4}(\controlled{0}{R}) \circ
\rho_{1,2,3,4}(\controlled{0}{R}) \circ \\
&\rho_{0,1,4,3}(\controlled{0}{R}) \circ
\rho_{1,0,2,3}(\controlled{0}{R}) \circ
\rho_{3,0,2,4}(\controlled{0}{R})
\end{align*}

See Figure~\ref{fig:ccrot} for the diagrams of both this, and the implementation of the $01$-word-controlled three-word rotation.
\qed
\end{proof}

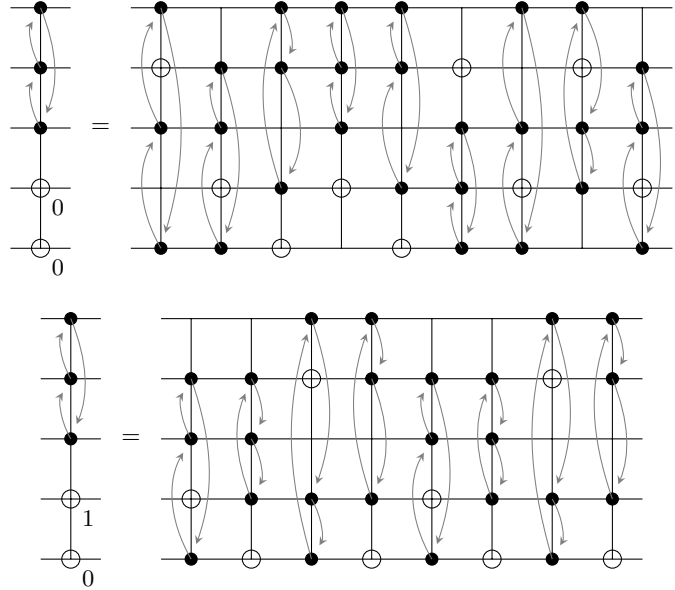
\begin{figure}[h]
\begin{center}
 \begin{tikzpicture}[scale = 0.8]

\draw (-2.5,0) grid (-1.5,4);
\draw (-2,0) circle (0.15);
\draw (-2,1) circle (0.15);
\draw[fill] (-2,2) circle (0.1);
\draw[fill] (-2,3) circle (0.1);
\draw[fill] (-2,4) circle (0.1);
\draw[-stealth,shorten >=6pt,color=gray] (-2,2) to[bend left=30] (-2,3);
\draw[-stealth,shorten >=6pt,color=gray] (-2,3) to[bend left=30] (-2,4);
\draw[-stealth,shorten >=6pt,color=gray] (-2,4) to[bend left=22] (-2,2);
\node[below right=0.05] () at (-2,0) {$0$}; \node[below right=0.05] () at (-2,1) {$0$};

\node () at (-1,2) {$=$};

\draw (-0.5,0) grid (8.5,4);

\draw (0,3) circle (0.15); \draw (1,1) circle (0.15); \draw (2,0) circle (0.15);
\draw (3,1) circle (0.15); \draw (4,0) circle (0.15); \draw (5,3) circle (0.15);
\draw (6,1) circle (0.15); \draw (7,3) circle (0.15); \draw (8,1) circle (0.15);

\draw[fill] (0,0) circle (0.1); \draw[fill] (1,0) circle (0.1); \draw[fill] (2,1) circle (0.1);
\draw[fill] (3,2) circle (0.1); \draw[fill] (4,1) circle (0.1); \draw[fill] (5,0) circle (0.1);
\draw[fill] (6,0) circle (0.1); \draw[fill] (7,1) circle (0.1); \draw[fill] (8,0) circle (0.1);

\draw[fill] (0,2) circle (0.1); \draw[fill] (1,2) circle (0.1); \draw[fill] (2,3) circle (0.1);
\draw[fill] (3,3) circle (0.1); \draw[fill] (4,3) circle (0.1); \draw[fill] (5,1) circle (0.1);
\draw[fill] (6,2) circle (0.1); \draw[fill] (7,2) circle (0.1); \draw[fill] (8,2) circle (0.1);

\draw[fill] (0,4) circle (0.1); \draw[fill] (1,3) circle (0.1); \draw[fill] (2,4) circle (0.1);
\draw[fill] (3,4) circle (0.1); \draw[fill] (4,4) circle (0.1); \draw[fill] (5,2) circle (0.1);
\draw[fill] (6,4) circle (0.1); \draw[fill] (7,4) circle (0.1); \draw[fill] (8,3) circle (0.1);


\draw[-stealth,shorten >=6pt,color=gray] (0,0) to[bend left=30] (0,2);
\draw[-stealth,shorten >=6pt,color=gray] (0,2) to[bend left=30] (0,4);
\draw[-stealth,shorten >=6pt,color=gray] (0,4) to[bend left=16] (0,0);

\draw[-stealth,shorten >=6pt,color=gray] (1,0) to[bend left=30] (1,2);
\draw[-stealth,shorten >=6pt,color=gray] (1,2) to[bend left=30] (1,3);
\draw[-stealth,shorten >=6pt,color=gray] (1,3) to[bend left=19] (1,0);

\draw[-stealth,shorten >=6pt,color=gray] (2,4) to[bend left=30] (2,3);
\draw[-stealth,shorten >=6pt,color=gray] (2,3) to[bend left=30] (2,1);
\draw[-stealth,shorten >=6pt,color=gray] (2,1) to[bend left=19] (2,4);

\draw[-stealth,shorten >=6pt,color=gray] (3,2) to[bend left=30] (3,3);
\draw[-stealth,shorten >=6pt,color=gray] (3,3) to[bend left=30] (3,4);
\draw[-stealth,shorten >=6pt,color=gray] (3,4) to[bend left=22] (3,2);

\draw[-stealth,shorten >=6pt,color=gray] (4,1) to[bend left=30] (4,3);
\draw[-stealth,shorten >=6pt,color=gray] (4,3) to[bend left=30] (4,4);
\draw[-stealth,shorten >=6pt,color=gray] (4,4) to[bend left=19] (4,1);

\draw[-stealth,shorten >=6pt,color=gray] (5,0) to[bend left=30] (5,1);
\draw[-stealth,shorten >=6pt,color=gray] (5,1) to[bend left=30] (5,2);
\draw[-stealth,shorten >=6pt,color=gray] (5,2) to[bend left=22] (5,0);

\draw[-stealth,shorten >=6pt,color=gray] (6,0) to[bend left=30] (6,2);
\draw[-stealth,shorten >=6pt,color=gray] (6,2) to[bend left=30] (6,4);
\draw[-stealth,shorten >=6pt,color=gray] (6,4) to[bend left=16] (6,0);

\draw[-stealth,shorten >=6pt,color=gray] (7,4) to[bend left=30] (7,2);
\draw[-stealth,shorten >=6pt,color=gray] (7,2) to[bend left=30] (7,1);
\draw[-stealth,shorten >=6pt,color=gray] (7,1) to[bend left=19] (7,4);

\draw[-stealth,shorten >=6pt,color=gray] (8,0) to[bend left=30] (8,2);
\draw[-stealth,shorten >=6pt,color=gray] (8,2) to[bend left=30] (8,3);
\draw[-stealth,shorten >=6pt,color=gray] (8,3) to[bend left=19] (8,0);
\end{tikzpicture}
\end{center}

\begin{center}
\begin{tikzpicture}[scale = 0.8]

\draw (-2.5,0) grid (-1.5,4);
\draw (-2,0) circle (0.15);
\draw (-2,1) circle (0.15);
\draw[fill] (-2,2) circle (0.1);
\draw[fill] (-2,3) circle (0.1);
\draw[fill] (-2,4) circle (0.1);
\draw[-stealth,shorten >=6pt,color=gray] (-2,2) to[bend left=30] (-2,3);
\draw[-stealth,shorten >=6pt,color=gray] (-2,3) to[bend left=30] (-2,4);
\draw[-stealth,shorten >=6pt,color=gray] (-2,4) to[bend left=22] (-2,2);
\node[below right=0.05] () at (-2,0) {$0$}; \node[below right=0.05] () at (-2,1) {$1$};

\node () at (-1,2) {$=$};

\draw (-0.5,0) grid (7.5,4);

\draw (0,1) circle (0.15); \draw (1,0) circle (0.15); \draw (2,3) circle (0.15);
\draw (3,0) circle (0.15); \draw (4,1) circle (0.15); \draw (5,0) circle (0.15);
\draw (6,3) circle (0.15); \draw (7,0) circle (0.15);

\draw[fill] (0,0) circle (0.1); \draw[fill] (1,1) circle (0.1); \draw[fill] (2,0) circle (0.1);
\draw[fill] (3,1) circle (0.1); \draw[fill] (4,0) circle (0.1); \draw[fill] (5,1) circle (0.1);
\draw[fill] (6,0) circle (0.1); \draw[fill] (7,1) circle (0.1);

\draw[fill] (0,2) circle (0.1); \draw[fill] (1,3) circle (0.1); \draw[fill] (2,4) circle (0.1);
\draw[fill] (3,4) circle (0.1); \draw[fill] (4,2) circle (0.1); \draw[fill] (5,3) circle (0.1);
\draw[fill] (6,4) circle (0.1); \draw[fill] (7,4) circle (0.1);

\draw[fill] (0,3) circle (0.1); \draw[fill] (1,2) circle (0.1); \draw[fill] (2,1) circle (0.1);
\draw[fill] (3,3) circle (0.1); \draw[fill] (4,3) circle (0.1); \draw[fill] (5,2) circle (0.1);
\draw[fill] (6,1) circle (0.1); \draw[fill] (7,3) circle (0.1);


\draw[-stealth,shorten >=6pt,color=gray] (0,0) to[bend left=30] (0,2);
\draw[-stealth,shorten >=6pt,color=gray] (0,2) to[bend left=30] (0,3);
\draw[-stealth,shorten >=6pt,color=gray] (0,3) to[bend left=19] (0,0);

\draw[-stealth,shorten >=6pt,color=gray] (1,1) to[bend left=22] (1,3);
\draw[-stealth,shorten >=6pt,color=gray] (1,3) to[bend left=30] (1,2);
\draw[-stealth,shorten >=6pt,color=gray] (1,2) to[bend left=30] (1,1);

\draw[-stealth,shorten >=6pt,color=gray] (2,0) to[bend left=16] (2,4);
\draw[-stealth,shorten >=6pt,color=gray] (2,4) to[bend left=19] (2,1);
\draw[-stealth,shorten >=6pt,color=gray] (2,1) to[bend left=30] (2,0);

\draw[-stealth,shorten >=6pt,color=gray] (3,1) to[bend left=19] (3,4);
\draw[-stealth,shorten >=6pt,color=gray] (3,4) to[bend left=30] (3,3);
\draw[-stealth,shorten >=6pt,color=gray] (3,3) to[bend left=22] (3,1);

\draw[-stealth,shorten >=6pt,color=gray] (4,0) to[bend left=30] (4,2);
\draw[-stealth,shorten >=6pt,color=gray] (4,2) to[bend left=30] (4,3);
\draw[-stealth,shorten >=6pt,color=gray] (4,3) to[bend left=19] (4,0);

\draw[-stealth,shorten >=6pt,color=gray] (5,1) to[bend left=22] (5,3);
\draw[-stealth,shorten >=6pt,color=gray] (5,3) to[bend left=30] (5,2);
\draw[-stealth,shorten >=6pt,color=gray] (5,2) to[bend left=30] (5,1);

\draw[-stealth,shorten >=6pt,color=gray] (6,0) to[bend left=16] (6,4);
\draw[-stealth,shorten >=6pt,color=gray] (6,4) to[bend left=19] (6,1);
\draw[-stealth,shorten >=6pt,color=gray] (6,1) to[bend left=30] (6,0);

\draw[-stealth,shorten >=6pt,color=gray] (7,1) to[bend left=19] (7,4);
\draw[-stealth,shorten >=6pt,color=gray] (7,4) to[bend left=30] (7,3);
\draw[-stealth,shorten >=6pt,color=gray] (7,3) to[bend left=22] (7,1);

\end{tikzpicture}
\end{center}

\caption{Diagrams for a $00$-controlled rotation and a $01$-controlled rotation built from $0$-controlled rotations. The rotations are controlled by the two bottommost wires, and the rotation rotates the wires in order $2 \rightarrow 3 \rightarrow 4 \rightarrow 2$, where the bottommost wire is the $0$th one. The diagram is read from left to right, and on each column we perform a $0$-controlled rotation. The large circle indicates the control wire, and the dots are the rotated wires. The arrows indicate the direction of rotation. 
}
\label{fig:ccrot}
\end{figure}

\begin{lemma}
The word cycle $(0001 \; 0010 \; 0100)$ can be built from six $0$-word-controlled three-wire rotations (but no less). The same is true for $(0011 \; 0110 \; 0101)$.
\end{lemma}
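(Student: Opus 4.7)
The plan follows the same pattern as the preceding lemma. Each claim is of the form ``this specific word cycle equals a composition of exactly six $0$-word-controlled three-wire rotations, and no fewer,'' so I would establish the upper and lower bounds separately.

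For the upper bound, I would exhibit, for each of the two target $3$-word-cycles, an explicit decomposition as a product of six $0$-word-controlled three-wire rotations, in the same style as the formula for $\controlled{00}{R}$ in the preceding lemma. Each constituent gate is specified by the choice of control wire and by the orientation of the rotation on the remaining three wires. Once candidate sequences are in hand, correctness is a routine finite verification: the composite is a permutation of $\{0,1\}^4$, so it suffices to evaluate it on all $16$ inputs and compare with the target. Since the targets are supported on a single weight class (weight $1$ for the first, weight $2$ for the second), most inputs are quickly seen to return to themselves, and only the three nontrivially moved words need careful tracking. I would present both decompositions in a figure analogous to Figure~\ref{fig:ccrot}.

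For the lower bound, the key observation is that on four wires there are only $4 \cdot 2 = 8$ distinct $0$-word-controlled three-wire rotations (four choices of control wire, two cyclic orientations of the remaining three wires). Consequently there are at most $8^k$ circuits of length $k$, so for $k = 5$ the search space has size $8^5 = 32768$, which is trivially exhaustible by brute force. I would simply report that an exhaustive computer search finds no composition of length $\leq 5$ that equals either of the target word cycles.

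The hard part is locating the length-six decompositions themselves; a blind breadth-first search through all $8^6 \approx 2.6 \cdot 10^5$ length-six circuits is still cheap, so the argument is computer-assisted in the same spirit as the preceding lemma. One might hope for a more structural proof by exploiting symmetries — for the first cycle the fixed weight-$1$ word $1000$ plays a distinguished role, and for the second cycle the three fixed weight-$2$ words form a symmetric set — but this symmetry analysis is not needed for the result as stated, and I would rely on direct search plus a clean presentation of the resulting six-gate circuits.
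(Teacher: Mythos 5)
Your proposal matches the paper's proof, which is simply the remark that the result ``can be proved by a short brute force search'': your upper bound (explicit six-gate circuits found by search) and lower bound (exhaustive enumeration of all compositions of length at most five, correctly counted via the $4 \cdot 2 = 8$ available $0$-word-controlled rotations on four wires) are exactly the computer-assisted argument the paper intends, spelled out in more detail than the paper itself gives. No gap; this is the same approach.
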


\begin{proof}
This can be proved by a short brute force search. \qed
\end{proof}


Let $\pi_1 = (001 \; 010 \; 100)$ and $\pi_2 = (011 \; 110 \; 101)$. Note that $\pi_1 \circ \pi_2$ is the three-wire rotation. Then, by the first lemma of this section and Lemma~\ref{lem:InductionLemma}, we have that $1$-control $(\pi_1 \circ \pi_2)$-permutations generate $k$-controlled $(\pi_1 \circ \pi_2)$-permutations for all $k$. By the second lemma of this section, $1$-controlled $(\pi_1 \circ \pi_2)$-permutations generate $1$-controlled $\{\pi_1, \pi_2\}$-permutations, so by Lemma~\ref{lem:ExtraWireLemma}, $k$-controlled $(\pi_1 \circ \pi_2)$-permutations generate $k$-controlled $\{\pi_1, \pi_2\}$-permutations for all $k$. Putting these together and combining with Corollary~\ref{cor:Something}, we have:

\begin{theorem}
Let $A = \{0,1\}$. Then the alternating conservative \what{} $\Cfour$ is generated by the controlled wire rotation
\[ f(a,b,c,d) = \left\{\begin{array}{cc}
(a,c,d,b) & \mbox{if } a = 0 \\
(a,b,c,d) & \mbox{otherwise}
\end{array}\right. \]
and the even conservative permutations of $A^3$.
\end{theorem}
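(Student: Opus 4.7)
The plan is to follow the outline sketched immediately above the theorem. First I would identify the generator $f$ with the $0$-word-controlled three-wire rotation $\controlled{0}{R}$, where $R = \pi_{(1\,2\,3)} \in B_3(A)$, and observe that on $\{0,1\}^3$ the rotation $R$ splits as $R = \pi_1 \circ \pi_2$, with $\pi_1 = (001\,010\,100)$ and $\pi_2 = (011\,110\,101)$ acting on the two non-trivial weight classes and fixing $000$ and $111$. Since $\pi_1, \pi_2, R$ are all even on every weight class of $A^3$, they belong to $\Cfour \cap B_3(A)$, so all three lie in the generating set $G = \{f\} \cup (\Cfour \cap B_3(A))$.

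Next I would verify the hypothesis of the Induction Lemma with $P = \{R\}$ and $k = 1$. The $1$-word-controlled rotation $\controlled{1}{R}$ lies in $\clone{G}$ because $\controlled{0}{R} \circ \controlled{1}{R} = \id_1 \oplus R$, which is in $\clone{G}$ since $R \in G$; hence $CP(1,\{R\}) \subseteq \clone{G}$. For the required inclusion $CP(2,\{R\}) \subseteq \clone{CP(1,\{R\})}$, the first lemma of the section gives $\controlled{00}{R}$ and $\controlled{01}{R}$ as compositions of $0$-word-controlled rotations; $\controlled{10}{R}$ follows from $\controlled{01}{R}$ by swapping the two control wires; and $\controlled{11}{R}$ follows either from the symmetric decomposition using $1$-word-controlled rotations, or from the identity $\controlled{00}{R} \circ \controlled{01}{R} \circ \controlled{10}{R} \circ \controlled{11}{R} = \id_2 \oplus R$ together with the observation that $\id_2 \oplus R \in \clone{CP(1,\{R\})}$ (apply $\controlled{0}{R}$ and $\controlled{1}{R}$ using an extra control wire). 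Lemma~\ref{lem:InductionLemma} then yields $\clone{CP(m,\{R\})} \subseteq \clone{G}$ for every $m \geq 1$.

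Then I would split $R$ into its factors $\pi_1$ and $\pi_2$ using the second lemma, which writes the $0$-word-controlled cycles $\controlled{0}{\pi_1} = (0001\,0010\,0100)$ and $\controlled{0}{\pi_2} = (0011\,0110\,0101)$ as compositions of $0$-word-controlled rotations. The $1$-word-controlled counterparts $\controlled{1}{\pi_i}$ follow by the same trick as for $R$, since $\pi_1, \pi_2 \in G$ and $\controlled{0}{\pi_i} \circ \controlled{1}{\pi_i} = \id_1 \oplus \pi_i$. Thus $CP(1,\{\pi_1,\pi_2\}) \subseteq \clone{CP(1,\{R\})}$, and Lemma~\ref{lem:ExtraWireLemma} applied with $h = k = 1$, $Q = \{\pi_1,\pi_2\}$, $P = \{R\}$ propagates this to $CP(m,\{\pi_1,\pi_2\}) \subseteq \clone{CP(m,\{R\})} \subseteq \clone{G}$ for every $m \geq 1$.

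Finally, for $A = \{0,1\}$ the set $P_4$ of Corollary~\ref{cor:Something} reduces to $\{\pi_1, \pi_2\}$ after discarding the degenerate triples with $a = b$ or $b = c$, and since $|A| = 2$ the corollary applies for every $n \geq 3$, giving $\Cfour \cap B_n(A) = \gen{CP(n-3,\{\pi_1,\pi_2\})} \subseteq \clone{G}$ for all $n \geq 4$. The arities $n = 1, 2$ are trivial because every weight class of $A^n$ has at most two elements and $\Alt_2$ is trivial, and $n = 3$ is in $G$ by hypothesis, so $\Cfour \subseteq \clone{G}$ as desired. The main obstacle I anticipate is the bookkeeping around the four $uv$-word-controlled rotations in the hypothesis of the Induction Lemma: the first lemma directly produces only $\controlled{00}{R}$ and $\controlled{01}{R}$, so $\controlled{10}{R}$ and (especially) $\controlled{11}{R}$ must be recovered by a short side argument before the Induction Lemma becomes applicable.
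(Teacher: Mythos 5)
Your proposal is correct and takes essentially the same route as the paper, whose proof is precisely the paragraph preceding the theorem: factor the three-wire rotation as $R = \pi_1 \circ \pi_2$, use the two computer-search lemmas together with Lemma~\ref{lem:InductionLemma} and Lemma~\ref{lem:ExtraWireLemma}, and finish with Corollary~\ref{cor:Something}. You are in fact more explicit than the paper about the bookkeeping it glosses over --- recovering $\controlled{10}{R}$ by rewiring, $\controlled{11}{R}$ and the $1$-word-controlled gates via the $0 \leftrightarrow 1$ symmetry or the composition-to-uncontrolled trick, and disposing of the trivial arities $n \leq 3$.
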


Clearly $f(a,b,c,d)$ is generated by $1$-controlled wire swaps. It follows that the Fredkin gate together with the (unconditional) wire swap generates all even conservative permutations of $\{0,1\}^n$ for $n \geq 4$.





\section{Conclusion}


We have been able to precisely determine the \what{} generated by a finite set of generators over an even order alphabet and show that over an odd alphabet, a finite collection of mappings generates the whole \what{}. The first result confirms a conjecture in \cite{boykett15} and the second gives a simpler proof of the same result from that paper. Moreover, we have shown that the alternating conservative \what{} is finitely generated on all alphabets, but the conservative \what{} is never finitely generated.

The methods are rather general: We have developed an induction result (Lemma~\ref{lem:InductionLemma}) for finding generating sets for \what{}s of controlled permutations, allowing us to determine finite generating sets for some \what{}s with uniform methods. We also prove the nonexistence of a finite generating family for conserved gates with a general method in Theorem~\ref{thm:ConservedQuantityNonFG}, when borrowed bits are not used. We only need particular properties of the weight function in the proof of Theorem~\ref{thm:XuGeneralization}, where it is shown that the (usual) conservative \what{} is not finitely generated even when borrowed bits are allowed.

In \cite{aaronsonetal15} the full list of reversible gate families in the binary case is listed, when the use of auxiliary bits is allowed. This includes the conservative \what{}, various modular \what{}s and nonaffine \what{}s. As we do not allow the use of auxiliary bits, we are not limited to these \what{}s; still, it is an interesting question which of them are finitely generated in our strict sense. 


While this paper develops strong techniques for showing finitely generatedness and non-finitely generatedness of \what{}s, our generating sets are rather abstract, and do not correspond very well to known generating sets. It would be of value to replace the constructions found by computer search in section \ref{secsearches} by more understandable constructions, in order to find more concrete generating sets in the case of general alphabets in the case of conservative gates. 

\bibliographystyle{splncs03}
\bibliography{revcompproc}

\end{document}